\newcommand{\xRightarrow}[2][]{\ext@arrow 0359\Rightarrowfill@{#1}{#2}}
\newcommand{\bi}{\begin{array}[t]{@{}l@{}}}
\newcommand{\ei}{\end{array}}
\newcommand{\ba}{\begin{array}}
\newcommand{\ea}{\end{array}}
\newcommand{\bda}{\[\ba}
\newcommand{\eda}{\ea\]}
\newcommand{\bp}{\begin{quote}\tt\begin{tabbing}}
\newcommand{\ep}{\end{tabbing}\end{quote}}
\newcommand{\ignore}[1]{}
\newcommand{\ms}[1]{{\bf MS: #1}}
\newcommand{\mathem}{\sf}
\newcommand{\figurebox}[1]
        {\fbox{\begin{minipage}{\textwidth} #1 \end{minipage}}}
\newcommand{\boxfig}[3]
        {\begin{figure*}\figurebox{#3\caption{\label{#1}#2}}\end{figure*}}        
\def\ruleform#1{{\setlength{\fboxrule}{1pt}\fbox{\normalsize $#1$}}}
\newcommand{\myirule}[2]{{\renewcommand{\arraystretch}{1.2}\ba{c} #1
                      \\ \hline #2 \ea}}
\newcommand{\rlabel}[1]{\mbox{(#1)}}
\newcommand{\gtTr}[2]{#1 <_{tr} #2}
\newcommand{\thread}[2]{#1 \sharp #2}
\newcommand{\link}[2]{#1 | #2}
\newcommand{\clock}[2]{#1^{#2}}
\newcommand{\ppclock}[3]{^{#2}#1^{#3}}
\newcommand{\ploc}[2]{(#1)_{#2}}
\newcommand{\absent}{\bullet}
\newcommand{\varState}[3]{#1^{(#2,#3)}}
\newcommand{\initVC}{\overline{0}}
\newcommand{\incC}[2]{{\mathem inc}(#1,#2)}
\newcommand{\maxC}[2]{{\mathem max}(#1,#2)}
\newcommand{\dom}[1]{{\mathit dom}(#1)}
\newcommand{\sep}[2]{(#1 \mid #2)}
\newcommand{\close}[1]{\mathit{close}(#1)}
\newcommand{\select}{\mathit{select}}
\newcommand{\first}{{\mathem fst}}
\newcommand{\second}{{\mathem snd}}
\newcommand{\pp}{\ \texttt{++}}
\newcommand{\GO}{\mbox{\mathem spawn}}
\newcommand{\SELECT}{\mbox{\mathem select}}
\newcommand{\MAKE}{\mbox{\mathem make}}
\newcommand{\MAKECHAN}[1]{\mbox{\mathem make}({\mathem chan} \ #1)}
\newcommand{\SYNCMAKECHAN}{\MAKECHAN{0}} 
\newcommand{\SEND}[2]{#1 \leftarrow #2}
\newcommand{\RCV}[1]{\leftarrow #1}
\newcommand{\CLOSE}[1]{\mbox{\mathem close}(#1)}
\newcommand{\tid}{\mbox{\mathem{tid}}}
\newcommand{\pc}{\mbox{\mathem{pc}}}
\newcommand{\atomicInc}[1]{\mathem{atomicInc}(#1)}
\newcommand{\pre}[1]{\mathit{pre}(#1)}
\newcommand{\post}[1]{\mathit{post}(#1)}
\newcommand{\postSnd}[3]{\mathit{post}(#1,#2, #3 !)} 
\newcommand{\postRcv}[3]{\mathit{post}(#1,#2, #3 ?)} 
\newcommand{\postRead}[1]{\mathit{read}(#1)}
\newcommand{\postWrite}[1]{\mathit{write}(#1)}
\newcommand{\postClose}[1]{\mathit{close}(#1)}
\newcommand{\postDefault}{\mathit{default}}
\newcommand{\Default}{\mathit{default}}
\newcommand{\signalTrace}[1]{\mathit{signal}(#1)}
\newcommand{\waitTrace}[1]{\mathit{wait}(#1)}
\newcommand{\instrt}[1]{\mathit{instr}(#1)}
\newcommand{\retrieve}[1]{\mathit{retr}(#1)}
\newcommand{\semP}[3]{#1 \xRightarrow{#2} #3}
\newcommand{\replay}[3]{\semP{#1}{#2}{#3}}
\newcommand{\assign}{:=}
\newcommand{\sndEvt}[2]{#1 \sharp #2!}
\newcommand{\rcvEvt}[2]{#1 \sharp #2?}
\newcommand{\readEvt}[2]{#1 \sharp \mathit{read}(#2)}
\newcommand{\writeEvt}[2]{#1 \sharp \mathit{write}(#2)}
\newcommand{\closeEvt}[2]{#1 \sharp \mathit{close}(#2)}
\newcommand{\defaultEvt}[1]{#1 \sharp \mathit{default}}
\newcommand{\initEvt}[2]{#1 \sharp \mathit{init}(#2)}
\newcommand{\snd}[1]{#1 !}
\newcommand{\rcv}[1]{#1 ?}
\newcommand{\dummyTID}{\infty}
\newdimen\legendxshift
\newdimen\legendyshift
\newcommand{\bclldist}{1mm}
\newcommand{\bclegend}[3][10mm]{%
	\legendxshift=0pt\relax
	\legendyshift=0pt\relax
	\xdef\legendnodes{}%
	\foreach \lcolor/\ltext [count=\ll from 1] in {#3}%
	{\global\legendlines\ll\pgftext{\setbox0\hbox{\bcfontstyle\ltext}\ifdim\wd0>\legendxshift\global\legendxshift\wd0\fi}}%
	\@tempdima#1\@tempdima0.5\@tempdima
	\pgftext{\bcfontstyle\global\legendxshift\dimexpr\bcwidth-\legendxshift-\bclldist-\@tempdima-0.72em}
	\legendyshift\dimexpr5mm+#2\relax
	\legendyshift\legendlines\legendyshift
	\global\legendyshift\dimexpr\bcpos-2.5mm+\bclldist+\legendyshift
	\begin{scope}[shift={(\legendxshift,\legendyshift)}]
		\coordinate (lp) at (0,0);
		\foreach \lcolor/\ltext [count=\ll from 1] in {#3}%
		{
			\node[anchor=north, minimum width=#1, minimum height=5mm,fill=\lcolor] (lb\ll) at (lp) {};
			\node[anchor=west] (l\ll) at (lb\ll.east) {\bcfontstyle\ltext};
			\coordinate (lp) at ($(lp)-(0,5mm+#2)$);
			\xdef\legendnodes{\legendnodes (lb\ll)(l\ll)}
		}
		\node[draw, inner sep=\bclldist,fit=\legendnodes] (frame) {};
	\end{scope}
}
\begin{document}

\title{Two-Phase Dynamic Analysis of Message-Passing Go Programs based on Vector Clocks}     


\author{Martin Sulzmann}

\affiliation{%
  \institution{Karlsruhe University of Applied Sciences}
  \streetaddress{Moltkestrasse 30}
  \city{Karlsruhe, Germany}
  \postcode{76133}
}
  \email{martin.sulzmann@gmail.com} 

\author{Kai Stadtm{\"u}ller}
   
\affiliation{%
  \institution{Karlsruhe University of Applied Sciences}
  \streetaddress{Moltkestrasse 30}
  \city{Karlsruhe, Germany}
  \postcode{76133}
}

   \email{kai.stadtmueller@live.de} 

\begin{abstract}
  Understanding the run-time behavior of concurrent programs is a challenging task.
  A popular approach is to establish a happens-before relation via vector clocks.
  Thus, we can identify bugs and performance bottlenecks, for example,
   by checking if two conflicting events may happen concurrently.
   We employ a two-phase method to derive vector clock information for a wide range of concurrency features that includes all of the message-passing features in Go.
   The first phase (instrumentation and tracing) yields a run-time trace that records all events
   related to message-passing concurrency that  took place.
  The second phase (trace replay) is carried out offline and replays
  the recorded traces to infer vector clock information.
  Trace replay operates on thread-local traces.
  Thus, we can observe behavior that might result from
  some alternative schedule.
  Our approach is not tied to any specific language.
  We have built a prototype for the Go programming language
  and provide empirical evidence of the usefulness of our method.
\end{abstract}


\keywords{message-passing concurrency, dynamic analysis, vector clocks}

\maketitle

\section{Introduction}

The analysis of concurrent programs is an important but due to the high degree
of non-determinism a notoriously difficult problem.
We consider here programs that make use of message-passing in
the style of Communicating Sequential Processes (CSP)~\cite{Hoare:1978:CSP:359576.359585}.
In our implementation, we support the Go programming language~\cite{golang}
but our approach also applies to languages with similar message-passing
features such as Concurrent ML (CML)~\cite{Reppy:1999:CPM:317040}.
Our focus is on the dynamic analysis (a.k.a.~testing) of Go where we assume
that the program is executed for a fixed number of steps.
Specifically, we consider the challenge of given a precise explanation
of the interplay among message-passing events that take place for a single execution run.

Consider the following program where we adopt Go-style notation for message-passing.
\begin{verbatim}
  spawn { x <- 1 };    // M1
  spawn { <-x };       // M2
  <-x;                 // M3
\end{verbatim}
We assume that \texttt{x} is some unbuffered channel. We write \texttt{x <- 1} to send value \texttt{1}
via channel \texttt{x} and write \texttt{<-x} to receive some value via \texttt{x}.
The actual values sent/received do not matter here.

We consider a program run where location M3 receives a value from M1.
The receive at location M2 is blocked but from the user perspective the program terminates without showing
any abnormal behavior.
In Go, once the main thread terminates, all remaining threads are terminated as well.
Our analysis is able to feedback to the user that M2 could have also received a value from M1
(which then would result in a deadlock).
Suppose we encounter the deadlock. That is, M2 receives from M1.
For such deadlocking situation, the Go run-time reports  for each blocked thread
the event that is responsible for the blockage.
Our analysis provides more details and 
feedbacks to the user that M3 is blocked (the main thread) but
could possibly communicate via M1.

In the following example, we make use of Go's ability to close a channel.
Any subsequent receive on a closed channel never blocks and obtains a default value.
However, any subsequent send fails and yields a run-time exception.
\begin{verbatim}
  spawn { x <- 1 };       // M1
  spawn { close(x) };     // M2
  <-x;                    // M3
\end{verbatim}
Assuming thread M2 executes followed by M3. From the user perspective,
the program terminates without any abnormal behavior.
Recall that once the main thread terminates, all remaining threads such as M1 terminate as well.
Our analysis feedbacks to the user that there might be a different schedule
where a send on a closed channel may occur.

In our final example, we consider Go's \texttt{select} statement 
which corresponds to non-deterministic choice.
\begin{verbatim}  
  spawn { x <- 1;           // M1
          y <- 1 };         // M2
  select {                   
    case <-x:               // M3
    case <-y:               // M4
  } 
\end{verbatim}
The \texttt{select} statement blocks if neither of the cases is available, i.e.~can communicate with some concurrent event.
If both cases are available, one of the cases is chosen based on a pseudo-random order to ensure fairness.
For our example, case M4 never applies. The user may wonder why this is so.
Our analysis feedbacks to the user that (for this specific execution run), any potential communication partner of M4
always happens after a communication of M3 took place.

Besides assisting the user in narrowing down the source of a bug, our analysis can also identify
performance bottlenecks. For example, consider the case of a (too slow) receiving thread that needs to negotiate
with a high number of sending parties. A specific instance of this analysis case can be used for lock contention.
We do not support the concept of a mutex in our formal treatment. However, it is well known that a mutex
effectively corresponds to a buffered channel of size one where
send equals to lock and receive equals to unlock. 

To provide the above feedback to the user,
we infer the dependencies among concurrent events.
This is achieved by establishing a happens-before relation~\cite{lamport1978time}
where the happens-before relation is derived via vector clocks~\cite{Fidge:1991:PAT:646210.683620,Mattern89virtualtime}.
Earlier work by Fidge~\cite{fidge1988timestamps} and Mattern~\cite{Mattern89virtualtime}
shows how to compute vector clocks in the message-passing setting.
We improve on these results as follows.
First, we also cover buffered channels of a fixed size.
Second, we introduce a novel form of a pre vector clock annotation to analyze
events that lack a communication partner. That is, events that could not commit such as M2 in our first example above.

A novel aspect of our work is a two-phase method to derive vector clocks.
The first phase consists of a light-weight instrumentation of the program that 
can be carried out via a simple pre-processing step.
The inference of vector clock information happens in a subsequent (off-line) phase.
Events are not recorded in a global trace as they actually took place.
Rather, we record events on a per-thread basis.
Hence, tracing requires little synchronization and therefore incurs only a low run-time overhead.

Almost no extra synchronization for tracing purposes among threads is required.
To guarantee that we derive vector clock information that corresponds to an actual program run,
a receive event obtains the thread id and program counter from the sending party.
This is the only extra intra-thread information required to properly match a receive event
to its corresponding send event.

In summary, our contributions are:
\begin{itemize}
\item Based on a simple instrumentation method to obtain thread-local run-time traces of recorded events (Section~\ref{sec:instrumentation}),
      we give a precise account of how to derive vector clock information for a wide
      range of message-passing features (Section~\ref{sec:vcs}).

\item We discuss several analysis scenarios where the vector clock information inferred
  proves to be useful (Section~\ref{sec:analysis}).
  The scenarios comprise detection of performance bottlenecks and potential bugs
    as well as recovery from a bug.

\item  We have built a prototype for the Go programming language~\cite{golang}
       and provide experiments that include real-world examples where we
       discuss the effectiveness of our method (Section~\ref{sec:experiments}).            
      
\end{itemize}

Related work is discussed in Section~\ref{sec:related-work}.
We conclude in Section~\ref{sec:conclusion}.
Further details can be found in the Appendix.

\section{Instrumentation and Tracing}
\label{sec:instrumentation}

  \boxfig{f:programs}{Programs}{    
  \bda{lcll}
  x,y, &\dots & &
  \mbox{Variables, Channel Names}
  \\ i,j, &\dots && \mbox{Integers}
  \\
  b & ::= &  x \mid i \mid \first(b) \mid \second(b) \mid (b,b)   & \mbox{Expressions}
  \\
  e,f & ::= & \SEND{x}{b} \mid y \assign \RCV{x}  & \mbox{Receive/Transmit}
  \\
  c & ::= & y \assign z \mid y \assign \MAKECHAN{i} \mid \CLOSE{x} \mid \GO\ p  & \mbox{Commands}
  \\ & \mid & \SELECT\ [e_i \Rightarrow p_i]_{i\in I} \mid \SELECT\ [e_i \Rightarrow p_i \mid p]_{i\in I}
  \\
  p,q,r & ::= & [] \mid c : p   & \mbox{Program}
  \eda
  }

We assume a simplified language to cover the main concurrency features of Go. See Figure~\ref{f:programs}.
Expressions include pairs (anonymous struct).
Notation for send and receive follows Go syntax.
Receive is always tied to an assignment statement
where we write $\assign$ to denote assignment. Type declarations of variables are omitted for brevity.
A buffered channel $x$ of size $i>0$ is introduced via $x \assign \MAKECHAN{i}$.
For $i=0$, we refer to $x$ as an unbuffered channel.
Commands to spawn a new thread and close a channel we have seen already.

Go supports non-deterministic choice via \SELECT\ where the to be selected cases
are represented in a list.
For example,
\bda{cc}
\SELECT\ [ \SEND{x}{b} \Rightarrow \dots, y \assign \RCV{z} \Rightarrow \dots ] & \rlabel{SEL}
\eda
denotes a command that either sends a value via channel $x$
or receive a value from channel $z$.
We also support \SELECT\ with a default case.
We assume that a single send/receive statement is represented by a \SELECT\ statement
with a single case.
We ignore locks as their treatment exactly corresponds 
to buffered channels of size one.

A program is represented as a list of commands.
  We follow Haskell style syntax
  and write $c : p$ to denote a non-empty list with head $c$ and tail $p$.
  We write $\pp$ to denote list concatenation.

  Programs are instrumented to record the events
  that took place when executing the program.
  Events are recorded on a per thread basis.
  Hence, we obtain a list of (thread-local) traces where each trace is connected to a thread.
We write $[\thread{1}{T_1}, \dots, \thread{n}{T_n}]$
to denote the list of recorded traces attached with their thread id.
The syntax of traces and events we use is as follows.

\begin{definition}[Run-Time Traces and Events]
\label{def:run-time-traces-events}  

  \bda{lcll}
  U,V & ::= & [] \mid \thread{i}{T} : U  & \mbox{Thread-local traces}
  \\ T & ::= & [] \mid t : T & \mbox{Trace}
  \\ t & ::= & \signalTrace{i} \mid \waitTrace{i} & \mbox{Events}
   \\ & \mid & \pre{as} \mid \postSnd{i}{i}{x}
                 \mid \postRcv{i}{i}{x}
                 \\ & \mid & \postClose{x} \mid \post{\Default} 
  \\
       a,b & ::= & \snd{x} \mid \rcv{x} \mid \Default          
    \\ as & ::= & [] \mid  a : as 
  \eda
\end{definition}

The purpose of each event becomes clear when considering
the instrumentation of programs.

  \boxfig{f:instrumentation}{Instrumentation}{      
   \bda{lcl}
 \instrt{[]} & = & []
  \\
 \instrt{c:p} & = & \instrt{c} : \instrt{p}
 \\
 \\
  \instrt{y \assign z} & = & [y \assign z] 
  \\
  \instrt{\GO\ p} & = & [i \assign \atomicInc{\mathit{cnt}}, x_{tid} \assign x_{tid} \pp\ [\signalTrace{i}],
    \\ & &             \GO\ ([x_{tid} \assign [\waitTrace{i}]] \pp\ \instrt{p})]
  \\
  \\
  \instrt{\CLOSE{x}} & = & [\CLOSE{x}] \pp\ [x_{tid} \assign x_{tid} \pp [\postClose{x}]]          
  \\
  \\
  \instrt{y \assign \MAKECHAN{i}} & = & [y \assign \MAKECHAN{i}]              
  \\
  \\
  \instrt{\SELECT\ [e_i \Rightarrow p_i]_{i\in \{1,\dots,n\}}}
  & =  & [x_{tid} \assign x_{tid} \pp\ [\pre{[\retrieve{e_1},\dots,\retrieve{e_n}]}],
    \\ & &\SELECT\ [\instrt{e_i \Rightarrow p_i}]_{i \in \{1,\dots,n\}}]
  \\
  \\
   \instrt{\SELECT\ [e_i \Rightarrow p_i | p]_{i\in \{1,\dots,n\}}}
   & =  & [x_{tid} \assign x_{tid} \pp\ [\pre{[\retrieve{e_1},\dots,\retrieve{e_n},\Default]}],
          \\ & &\SELECT\ [\instrt{e_i \Rightarrow p_i} | [ x_{tid} \assign x_{tid} \pp [\post{\Default}]] \pp\ \instrt{p} ]_{i \in \{1,\dots,n\}}]
   \\
   \\
  \instrt{\SEND{x}{b} \Rightarrow p}
      & =  & \SEND{x}{((\tid,\pc),b)} \Rightarrow
                  (x_{tid} \assign x_{tid} \pp\ [\postSnd{\tid}{\pc}{x}]) \pp\ \instrt{p}
  \\
  \\
  \instrt{y \assign \RCV{x} \Rightarrow p}
   & = & y' \assign \RCV{x} \Rightarrow
          [x_{tid} \assign x_{tid} \pp\ [\postRcv{\first(\first(y'))}{\second(\first(y'))}{x}]],           
            \\ &&   \ \ \ \ \ \ \ \ \ \ \ \ \ \ \ \ \ \ \ \ \ \ \  y \assign \second(y')]   \pp\ \instrt{p}
  \eda
  
    \bda{c}
  \retrieve{\SEND{x}{b}}  =  \snd{x}
  \ \ \ \  \retrieve{y = \RCV{x}}  =  \rcv{x}
  \eda
  }

  Figure~\ref{f:instrumentation} formalizes the instrumentation of programs
  which can be carried out via a simple pre-processor.
  For each thread, we assume a thread-local variable $x_{tid}$ that stores the events
  that take place in this thread.
  In turn, we discuss the various instrumentation cases.

  As we support dynamic thread creation,
  there might be dependencies among threads when it comes to tracing.
For example, consider the following program snippet.
\bda{l}
[ \dots , \GO\ [\dots]].
\eda
Our instrumentation yields
\bda{l}
    [ \dots, x_1 \assign x_1 \pp  [\signalTrace{1}], \GO\ [x_2 \assign [\waitTrace{1}], \dots]].
\eda
Variable $x_1$ logs the events of the main thread, and variable $x_2$ the events
in the newly spawned thread. Events in $x_2$ are only processed
once the wait event is matched against its corresponding signal event.
Thus, we ensure that events logged in a newly created thread happen after
the events that took place in the thread that issued the $\GO$ command.
In the instrumentation, we assume a shared variable $\mathit{cnt}$
where the primitive $\atomicInc{\mathit{cnt}}$ atomically increments this variable
and returns the updated value.

In Go it is possible to close a channel which means that any subsequent send
on that channel yields an error but any receive succeeds by retrieving a dummy value.
We support this feature by recording $\postClose{x}$.

We support unbuffered as well as buffered channels of a fixed size.
For operations on buffered channels, we also need to record the buffer size.
This can be easily done in the instrumentation but is left out for brevity.

In case of channel operations send and receive, we use (post) events
$\postSnd{i}{i}{x}$ and $\postRcv{i}{i}{x}$
to represent committed operations. That is, sends and receives that actually took place.
To uniquely connect a sender to its corresponding receiver,
the sender transmits its thread id and program counter to the receiver.

Consider instrumentation case $\instrt{\SEND{x}{b} \Rightarrow p}$
that deals with the send operation.
We assume a primitive $\tid$ to compute the thread id
and a primitive $\pc$ to compute the thread-local program counter.
Both values are additionally transmitted to the receiver.
We assume common tuple notation.
Instead of $b$ we transmit $((\tid,\pc),b)$
and store the event $\postSnd{\tid}{\pc}{x}$.
For simplicity, we assume that both calls to $\pc$ yield the same value.
In an actual implementation, we would need to store the current program counter
and then transmit the stored value as well as record the value in the post event.
Further note that events are stored in thread-local traces.
Hence, in an implementation we could save space and drop the $\tid$ component
for committed send operations.
Here, we keep the $\tid$ component to have a uniform
presentation for send and receive.

At the receiving site, see case $\instrt{y \assign \RCV{x} \Rightarrow p}$,
we assume  primitives $\first$ and $\second$ to access the respective components
of the received value.
The receiver stores $\postRcv{\first(\first(y'))}{\second(\first(y'))}{x}$
to record the sender's thread id and program counter.

In addition to committed events we also keep track
of events that could possibly commit.
This can applies to not chosen cases in a \SELECT\ statements.
We make use of pre events to represent such cases.
For the earlier select example \rlabel{SEL},
we record both possibilities via
the event $\pre{[\snd{x}, \rcv{z}]}$.
A select statement may include a default case.
We represent this variant by including $\Default$ in the list of pre event.
If the default case is chosen, we store the event $\post{\Default}$.

The program's dynamic behavior is captured by the trace obtained
from running the instrumented program. By replaying the trace we can infer for each event a vector clock.
This is what we will discuss next.

\section{Vector Clocks}
\label{sec:vcs}

\boxfig{f:trace-replay}{Trace Replay}{

  \bda{c}
  \ruleform{\replay{\sep{Q}{U}}{E}{\sep{Q}{U}}}
  \\
  \\
  \rlabel{Shuffle} \
  \myirule{\mbox{$\pi$ permutation on $\{1,\dots,k\}$}
           \ \
           \mbox{$\rho$ permutation on $\{1,\dots,m\}$}}
          {\replay{\sep{[\thread{{x_1}^{n_1}}{B_1}, \dots, \thread{{x_k}^{n_k}}{B_k}]}
                       {[\thread{1}{T_1}, \dots, \thread{m}{T_m}]} \\}
                   {}
                   {\\ \sep{[\thread{{x_{\pi(1)}}^{n_{\pi(1)}}}{B_{\pi(1)}}, \dots, \thread{{x_{\pi(k)}}^{n_{\pi(k)}}}{B_{\pi(k)}}]}
                       {[\thread{\rho(1)}{T_{\rho(1)}}, \dots, \thread{\rho(m)}{T_{\rho(m)}}]}}}          
  \\
  \\
  \rlabel{Closure} \
  \myirule{\replay{\sep{Q_1}{U_1}}{E_1}{\sep{Q_2}{U_2}} \ \ \ \replay{\sep{Q_2}{U_2}}{E_2}{\sep{Q_3}{U_3}}}
          {\replay{\sep{Q_1}{U_1}}{E_1 \pp\ E_2}{\sep{Q_3}{U_3}}}
  \\
  \\
  \rlabel{Signal/Wait} \
\myirule{T_1 = \signalTrace{i} : T_1'
         \ \ \  T_2 = \waitTrace{i} : T_2'
        }
        {\replay{\sep{Q}{\thread{\clock{i_1}{cs_1}}{T_1} :
                 \thread{\clock{i_2}{cs_2}}{T_2} : U}}
          {[]}
            {\sep{Q}{\thread{\clock{i_1}{\incC{i_1}{cs_1}}}{T_1'} :
             \thread{\clock{i_2}{\incC{i_2}{cs_1}}}{T_2'} : U}}
        }
  \\
  \\      
\rlabel{Sync} \
\myirule{T_1 = \pre{[a_1,\dots,a_n,\snd{x}]} : \postSnd{i_1}{j}{x} : T_1'
        \\ T_2 = \pre{[b_1,\dots,b_m,\rcv{x}]} : \postRcv{i_1}{j}{x} : T_2'
        \\ cs = \maxC{\incC{i_1}{cs_1}}{\incC{i_2}{cs_2}}
        \\ E_1 = [\ppclock{\sndEvt{i_1}{x}}{cs_1}{cs}, \ppclock{\rcvEvt{i_2}{x}}{cs_2}{cs}]
        \\ E_2 = [\ppclock{i_1 \sharp a_1}{cs_1}{\absent}, \dots, \ppclock{i_1 \sharp a_n}{cs_1}{\absent},
                  \ppclock{i_2 \sharp b_1}{cs_2}{\absent}, \dots, \ppclock{i_2 \sharp b_m}{cs_2}{\absent}]
        }       
        {\replay{\sep{Q}{\thread{\clock{i_1}{cs_1}}{T_1} :
                 \thread{\clock{i_2}{cs_2}}{T_2} : U}}
          {E_1 \pp E_2}
          {\sep{Q}{\thread{\clock{i_1}{cs}}{T_1'} : \thread{\clock{i_2}{cs}}{T_2'} : U}}
        }
  \\
  \\      
  \rlabel{Send} \
  \myirule{  B = B' \pp [\bot^{cs_1},\dots,\bot^{cs_m}]
    \ \ \ \ cs'' = \maxC{\incC{i}{cs}}{cs_1}
    \\  B'' = B' \pp [\clock{\postSnd{i}{j}{x}}{cs''}, \bot^{cs_2}, \dots \bot^{cs_m}]
    \\ E = [\ppclock{\sndEvt{i}{x}}{cs}{cs''}] \pp [\ppclock{i \sharp a_1}{cs}{\absent},\dots,\ppclock{i \sharp a_n}{cs}{\absent}]
          }
          {\replay{\sep{\thread{x^k}{B} : Q}{(\thread{\clock{i}{cs}}{\pre{[a_1,\dots,a_n,\snd{x}} : \postSnd{i}{j}{x} : T} ) : U}}
                  {E}
                  {\sep{\thread{x^k}{B''} : Q}{\thread{\clock{i}{cs''}}{T} : U}}
          }
  \\
  \\
  \rlabel{Receive} \
\myirule{B = \clock{\postSnd{i_1}{j}{x}}{cs'} : B'
  \\ cs'' = \maxC{\incC{i_1}{cs}}{cs'}
  \\ E = [\ppclock{\rcvEvt{i_2}{x}}{cs}{cs''}] \pp [\ppclock{i_2 \sharp a_1}{cs}{\absent},\dots,\ppclock{i_2 \sharp a_n}{cs}{\absent}]
  }
        {\replay{\sep{\thread{x^k}{B} : Q}{\thread{\clock{i_2}{cs}}{\pre{[a_1,\dots,a_n,\rcv{x}]} : \postRcv{i_1}{j}{x} : T} : U}}
                {E}
                {\sep{\thread{x^k}{B' \pp [\bot^{cs''}]} : Q}{\thread{\clock{i_2}{cs''}}{T} : U}}
        }
  \\
  \\
 \rlabel{Receive-Closed} \
\myirule{cs' = \ \incC{i}{cs} \ \
  \\ T = \pre{[\dots,\rcv{x}, \dots]} : \postRcv{\dummyTID}{\dummyTID}{x} : T'}
        {\replay{\sep{Q}{\thread{\clock{i_2}{cs}}{T} : U}}
                {[\ppclock{\rcvEvt{i_2}{x}}{cs}{cs'}]}
                {\sep{Q}{\thread{\clock{i_2}{cs'}}{T'} : U}}
        }
 \\
 \\
  \rlabel{Close} \
 \myirule{cs' = \ \incC{i}{cs}}
         {\replay{\sep{Q}{\thread{\clock{i}{cs}}{(\postClose{x}:L)} : U}}
                 {[\clock{\closeEvt{i}{x}}{cs'}]}
                 {\sep{Q}{\thread{\clock{i}{cs'}}{L} : U}}
         }
   \\
   \\
   \rlabel{Default} \
 \myirule{cs' = \ \incC{i}{cs} \ \
   \\ T = \pre{[\dots]} : \post{\Default} : T'}
         {\replay{\sep{Q}{\thread{\clock{i}{cs}}{T} : U}}
                 {[\ppclock{\defaultEvt{i}}{cs}{cs'}]}
                 {\sep{Q}{\thread{\clock{i}{cs'}}{T'} : U}}
         }
  \eda
}


The goal is to annotate events with vector clock information.
For this purpose, we replay the set of recorded run-time traces $T_i$
to derive a global trace $E$ of vector clock annotated events.
The syntax is as follows.

\begin{definition}[Vector Clock Annotated Events]
\bda{lcll}
cs & ::= & \absent \mid [] \mid n : cs   & \mbox{Vector clock}
\\
e & ::= & \ppclock{\sndEvt{i}{x}}{cs}{cs}         & \mbox{Annotated Events}
\\  & \mid & \ppclock{\rcvEvt{i}{x}}{cs}{cs}
 \\ & \mid & \clock{\closeEvt{i}{x}}{cs}
        \mid \ppclock{\defaultEvt{i}}{cs}{cs}
\\
E & ::= & [] \mid e : E      
\eda
\end{definition}

For convenience, we represent a vector clock as a list of clocks where the first position
belongs to thread 1 etc.
We include $\absent$ to deal with events that did not commit.
More on this shortly.
We write $\initVC$ to denote the initial vector clock
where all entries (time stamps) are set to~$0$.
We write $cs[i]$ to retrieve the $i$-th component in $cs$.
We define $cs_1 > cs_2$ if for each position $i$ we have that $cs_1[i] > cs_2[i]$.
We write $\incC{i}{cs}$ to denote the vector clock obtained from $cs$
where all elements are the same but at index $i$ the element is incremented by one.
We write $\maxC{cs_1}{cs_2}$ to denote the vector clock where we per-index take
the greater element.
We write $\overline{i}$ to denote the vector clock $\incC{i}{\initVC}$, i.e.~all entries are zero
except position~$i$ which is equal to one.
We write $\clock{i}{cs}$ to denote thread $i$ with vector clock $cs$.

We write $\ppclock{\sndEvt{i}{x}}{cs_1}{cs_2}$
to denote a send operation via channel~$x$ in thread~$i$.
We infer two vector clock annotations~$cs_1$ and $cs_2$ for the following reason.
In the (run-time) trace $T$, we record for each channel operation a pre event (communication about to happen)
and a post event (communication has happened). Vector clock $cs_1$ corresponds to the pre event
and $cs_2$ to the post event. 

We write $\ppclock{\rcvEvt{i}{x}}{cs_1}{cs_2}$ to denote a
vector clock annotated receive event in thread $i$.
As in case of send, $cs_1$ represents the vector clock of the pre event
and $cs_2$ the vector clock of the post event.

We write $\ppclock{\defaultEvt{i}}{cs_1}{cs_2}$ to denote a vector clock annotated default event connected
to a select statement. Like in case of send and receive, we find pre and post vector clock annotations.
We will argue later that having
the vector clock information for the pre event can have significant advantages for the analysis.
In fact, as we support selective communication, the post vector clock for not selected
cases may be absent. We introduce the following notation.

\begin{definition}[Not Selected Events]
\label{def:not-selected-event}  
  We write $\ppclock{i \sharp a}{cs}{\absent}$ to denote an event $a$
  from thread $i$ with pre vector clock $cs$
  where the post vector clock is absent.
  For $a=x!$, $\ppclock{ \sharp a}{cs}{\absent}$ is a short-hand
  for $\ppclock{\sndEvt{i}{x}}{cs}{\absent}$.
  For $a=x?$, $\ppclock{i \sharp a}{cs}{\absent}$ is a short-hand
  for $\ppclock{\rcvEvt{i}{x}}{cs}{\absent}$.
  For $a = \postDefault$, $\ppclock{i \sharp a}{cs}{\absent}$ is a short-hand
  for $\ppclock{\defaultEvt{i}}{cs}{\absent}$.
\end{definition}

We write $\clock{\closeEvt{i}{x}}{cs}$ to denote a vector clock annotated close event on channel~$x$
in thread~$i$ where $cs$ is the post vector clock.
There is no pre vector clock as close operations never block.

Figure~\ref{f:trace-replay} defines to trace replay rules to infer vector clock information.
Replay rules effectively resemble operational rewrite rules to describe the semantics of a concurrent program.
We introduce a rewrite relation $\replay{\sep{Q}{U}}{E}{\sep{Q}{U}}$
among configurations $\sep{Q}{U}$ to derive $E$.
Component $U$ corresponds to the list of thread-local run-time traces.
Recall Definition~\ref{def:run-time-traces-events}.
Component $Q$ keeps track of the list of buffered channels and their current state.
Its definition is as follows.

\begin{definition}[Buffered Channels]
  \bda{lcl}
  P, Q & ::= & [] \mid \thread{x^n}{B} : Q
  \\
  B & ::= & [] \mid \clock{\postSnd{i}{i}{x}}{cs} : B
          \mid \clock{\bot}{cs} : B
  \eda
\end{definition}
We assume that in $\thread{x^n}{B}$, $x$ refers to the channel name
and $n$ to the buffer size. Buffer size information can be obtained during run-time tracing
but we omitted this detail in the formalization of the instrumentation.
$B$ denotes the buffer.
Initially, we assume that all buffer slots are empty and filled with $\bot^{[0,\dots,0]}$
where $[0,\dots,0]$ represents the initial vector clock.
We write $B = B' \pp [\bot^{cs_1},\dots,\bot^{cs_m}]$ to denote a buffer
where all slots in $B'$ are occupied.

We have now everything in place to discuss the trace replay rules.

\subsection{Shuffling and Collection}

In our tracing scheme, we do not impose a global order among events.
Events are stored in thread-local traces.
This allows us to explore alternative schedules
by suitably rearranging (shuffle) the list of buffered channels and thread-local traces.
In terms of the replay rules, we therefore find rule \rlabel{Shuffle}.
Via rule \rlabel{Closure} we simply combine several elementary rewriting steps.

The next set of rules assume that channels and traces are suitably shuffled
as these rules only inspect the leading buffer and the two leading traces.

\subsection{Intra Thread Dependencies}

Rule \rlabel{Signal/Wait} ensures that a thread's trace is only processed
once the events stored in that trace can actually take place.
See the earlier example in Section~\ref{sec:instrumentation}.

\subsection{Unbuffered Channels}

Rule \rlabel{Sync} processes send/receive communications via some unbuffered channel.
For convenience, we assume that primitive events in the list $\pre{as}$
can be suitably rearranged.
We check for two thread-local traces where a send and receive took place
and the send and receive are a matching pair.
That is, in the actual program run, the receiver obtained the value from this sender.
A matching pair is identified by comparing the recorded thread id and program counter of the sender.
See post events $\postSnd{i_1}{j}{x}$ and $\postRcv{i_1}{j}{x}$.

Our (re)construction of vector clocks follows the method
developed by Fidge and Mattern.
We increment the time stamps of the threads involved and
exchange vector clocks. To indicate that a synchronization between two concurrent
events took place, we build the maximum.
Our novel idea is to infer pre vector clocks.
Thus, we can detect (a) alternative communications,
and (b) events not chosen within a select statement.
Recall the notation introduced in Definition~\ref{def:not-selected-event}.
For brevity, we ignore the formal treatment of (c) orphan events.
That is, events with a singleton list of pre events that lack a post event.
We can treat such events like case (b) by including a dummy post event.

Here is an example to illustrate~(a).
\begin{example}
\label{ex:sync-chan}
Consider the program annotated with thread id numbers.
\bda{lcl}
    [ x \assign \SYNCMAKECHAN, y \assign \SYNCMAKECHAN, && (1) \\
      \GO\ [\SEND{x}{1}],  && (2) \\
      \GO\ [\RCV{x}, \SEND{x}{1}], && (3) \\
      \GO\ [\SEND{y}{1}, \RCV{x}], && (4) \\
      \GO\ [\RCV{y}] && (5)
\eda
We assume a specific program run where thread 2 synchronizes with thread 3.
Thread 4 synchronizes with thread 5 and finally
thread 3 synchronizes with thread 4.
Here is the resulting trace.
For presentation purposes, we write the initial vector clock behind each thread.
\bda{ll}
    [\thread{1}{[\signalTrace{2}, \signalTrace{3}, \signalTrace{4}, \signalTrace{5}]},
      & [1,0,0,0,0]
      \\     \thread{2}{[\waitTrace{2}, \pre{\snd{x}}, \postSnd{2}{1}{x}]},
      & [0,1,0,0,0]
      \\     \thread{3}{[\waitTrace{3}, \pre{\rcv{x}}, \postRcv{2}{1}{x}, \pre{\snd{x}}, \postSnd{3}{2}{x}]},
      & [0,0,1,0,0]
      \\     \thread{4}{[\waitTrace{4}, \pre{\snd{y}}, \postSnd{4}{1}{y}, \pre{\rcv{x}}, \postRcv{3}{2}{x}]},
      & [0,0,0,1,0]
\\     \thread{5}{[\waitTrace{5}, \pre{\rcv{y}}, \postRcv{4}{1}{y}]}
    ] & [0,0,0,0,1]
\eda
For example, in thread 3, in the second program step, the send operation on channel~$x$ could commit.
Hence, we find the event $\postSnd{3}{2}{x}$.

Trace replay proceeds as follows. We process
intra-thread dependencies via rule \rlabel{Signal/Wait}.
This leads to the following intermediate step.
\bda{ll}
    [\thread{1}{[]},
      & [5,0,0,0,0]
      \\     \thread{2}{[\pre{\snd{x}}, \postSnd{2}{1}{x}]]},
      & [1,1,0,0,0]
      \\     \thread{3}{[\pre{\rcv{x}}, \postRcv{2}{1}{x}, \pre{\snd{x}}, \postSnd{3}{2}{x}]},
      & [2,0,1,0,0]
      \\     \thread{4}{[\pre{\snd{y}}, \postSnd{4}{1}{y}, \pre{\rcv{x}}, \postRcv{3}{2}{x}]},
      & [3,0,0,1,0]
\\     \thread{5}{[\pre{\rcv{y}}, \postRcv{4}{1}{y}]}
    ] & [4,0,0,0,1]
\eda

Next, we exhaustively synchronize events and attach pre/post vector clocks.
We show the final result.
For presentation purposes, instead of $\ppclock{\sndEvt{i_1}{x}}{cs'}{cs}$, 
we write the short form $\ppclock{\snd{x}}{cs'}{cs}$.
Thread ids are written on the left.
Events annotated with pre/post vector clocks are written next
to the thread in which they arise.
We omit the main thread (1) as there are no events recorded
for this thread.

\bda{ll}
(2)     & \underline{\ppclock{\snd{x}}{[1,1,0,0,0]}{[2,2,2,0,0]}}
\\ (3)  & \ppclock{\rcv{x}}{[2,0,1,0,0]}{[2,2,2,0,0]}, \ppclock{\snd{x}}{[2,2,2,0,0]}{[4,2,3,3,2]}
\\ (4) & \ppclock{\snd{y}}{[3,0,0,1,0]}{[4,0,0,2,2]}, \underline{\ppclock{\rcv{x}}{[4,0,0,2,2]}{[4,2,3,3,2]}}
\\ (5) & \ppclock{\rcv{y}}{[4,0,0,0,1]}{[4,0,0,2,2]}
\eda
Consider the underlined events.
Both are matching events, sender and receiver over the same channel.
An alternative communication
among two matching events requires both events to be concurrent
to each other.
In terms of vector clocks, concurrent means that their vector
clocks are incomparable.

However, based on their post vector clocks it appears that
the receive on channel~$x$ in thread 4 happens after
the send in thread 2 because
$[2,2,2,0,0] < [4,2,3,3,2]$.
This shows the limitations of post vector clocks
as it is easy to see that both events
represent an alternative communication.
Thanks to pre vector clocks,
this alternative communication can be detected.
We find that events are concurrent because
their pre vector clocks are incomparable,
i.e. $[1,1,0,0,0] \not< [4,0,0,2,2]$
and $[1,1,0,0,0] \not> [4,0,0,2,2]$.
\end{example}

\subsection{Buffered Channels}

Neither Fidge nor Mattern cover buffered channels.
We could emulate buffered channels by treating each send operation as if this operation
is carried out in its own thread.
However, this leads to inaccuracies.
\begin{example}
  Consider
  \bda{lcl}
      [ x \assign \MAKECHAN{1},
        \\ \SEND{x}{1},  && (1)
        \\ \GO\ [ \SEND{x}{1} ], && (2)
        \\ \RCV{x} ] && (3)
  \eda
  Assuming an emulation of buffered channels as described above,
  our analysis would report that (2) and (3) form an alternative match.
 However, in the Go semantics, buffered messages are queued.
 Hence, for \emph{every} program run the only possibility
 is that (1) synchronizes with (2) and a synchronization with (3) never takes place!
 \end{example}
 
 We can eliminate such false positives by keeping track of (un)occupied buffer space
 during trace replay.
 Rule \rlabel{Receive} processes a receive over some buffered channel.
 We check the first occupied buffer slot where buffers are treated like queues.
 The  enqueued send event must match the receive event.
 We check for a match by comparing received thread id and program counter.
 The receive events pre/post vector clocks are computed as in case of rule \rlabel{Sync}.
 The buffered send event is dequeued and we enqueue an empty buffer slot attached with the sender's vector clock.
 This is important to establish the proper order among receivers and senders as we will see shortly.

 Consider rule \rlabel{Send} where a sender synchronizes with an empty buffer slot.
 Recall that notation $B = B' \pp [\bot^{cs_1},\dots,\bot^{cs_m}]$ implies that all
 buffer slots in $B'$ are occupied.
 We increment the time stamp of the thread and synchronize with the empty buffer slot 
 by building the maximum. The now occupied buffer slot carries the resulting vector clock.
 If we would simply overwrite the buffer slot with the sender's vector clock,
 the proper order among receive and send events may get lost.

 \begin{example}
Consider 
\bda{lcl}
    [x \assign \MAKECHAN{2},
      \\ \GO\ [ \RCV{x} ],     && (1)
      \\ \SEND{x}{1},
      \\ \SEND{x}{1},
            \\ \SEND{x}{1} ]   && (2)

\eda
It is clear that for any program run the receiver at location (1) happens before the send at location (2).
Suppose we encounter a program run where the  first two sends take place before the receive.
For brevity, we omit the set of local traces containing all recorded pre/post events.
Here is the program annotated with (post) vector clock information.
\bda{lcl}
    [x \assign \MAKECHAN{2},
      \\ \GO\ [ \RCV{x} ],     && [3,2]
      \\ \SEND{x}{1},          && [3,0]
      \\ \SEND{x}{1},          && [4,0]
            \\ \SEND{x}{1} ]   && ??

\eda
Recall that the main thread (with id number~$1$) creates a new thread (signal/wait events).
This then leads to the first send having the post vector clock $[3,0]$
and the second send having the post vector clock $[4,0]$.
At this point, the buffer contains
$$
[\postSnd{1}{\_}{x}^{[3,0]}, \postSnd{1}{\_}{x}^{[4,0]}].
$$
We write $\_$ to indicate that the program counter of the sending thread~$1$ does not matter here.
Then, the receive synchronizes with the first send.
Hence, we find the post vector clock $[3,2]$ and the buffer
has the form $[\post{\thread{1}{\snd{x}}}^{[4,0]}, \bot^{[3,2]}]$.
As there is an empty buffer slot. The third send can proceed.

Ignoring the vector clock attached to the empty buffer slot would result in the (post) vector clock $[5,0]$
for the third send. This is clearly wrong as then the receive and (third) send appear to be concurrent to each other.
Instead, the sender synchronizes with the vector clock of the empty buffer slot. See rule \rlabel{Send}.
In essence, this vector clock corresponds to the vector clock of the earlier receive.
Hence, we find that the third send has the vector clock $[5,2]$
and thus the receive happens before the third send.
\end{example}

 \subsection{Closed Channel}

 We deal with close events by simply incrementing the thread's timestamp.
 See rule \rlabel{Close}.
 A receive event on a closed channel is distinguished from other receives by
 the fact that dummy values are received. We write $\dummyTID$ to refer to a dummy thread id
 and program counter.
 See rule \rlabel{Receive-Closed}.

\subsection{Select with Default} 
 
Rule \rlabel{Default} covers that case that a default branch of a select statement
has been taken.

\subsection{Properties}
\label{sec:properties}

Senders and receivers are uniquely connected based on the sender's thread id and program counter.
Hence, any vector clock annotation obtained via trace replay
corresponds to a \emph{valid} program run.
However, due to our thread-local tracing scheme,
trace replay rules do not need to follow the schedule of the \emph{actual} program run.
It is possible to explore alternative schedules.
In case of buffered channels this may lead to different vector clock annotations.
For unbuffered channels it turns out that the behavior, i.e.~vector clock annotation,
is completely deterministic regardless of the schedule. Formal details follow below.

We write $B^k$ to denote the initial buffer connected to some buffered channel
of size $k>0$. We assume that $B^k$ is filled with $k$ elements $\bot^{\initVC}$.
That is, $B^k=[\bot^{\initVC}, \dots, \bot^{\initVC}]$.

\begin{definition}[Deterministic Replay]
\label{def:det-comm}    
  Let $p$ be a program and $q$ its instrumentation
  where for a specific program run we observe
  the list $[\thread{1}{T_1},\dots,\thread{n}{T_n}]$
  of thread-local traces.
  Let $x_1^{k_1}, \dots, x_m^{k_m}$ be the buffered channels appearing in $p$
  annotated with their buffer size.

  We say that trace replay
    \bda{c}
      \replay{\sep{\thread{x_1^{k_1}}{B^{k_1}},\dots, \thread{x_m^{k_m}}{B^{k_m}}}
               {[\thread{\clock{1}{\initVC}}{T_1}, \dots, \thread{\clock{n}{\initVC}}{T_n}]}\\}
          {E}
          {\\ \sep{\thread{x_1^{k_1}}{B_1},\dots, \thread{x_m^{k_m}}{B_m}}
            {[\thread{\clock{1}{cs_1}}{T_1'},\dots, \thread{\clock{n}{cs_n}}{T_n'}]}}
    \eda
    is \emph{exhaustive} iff no further trace replay rules
    are applicable and for all $i\in \{1,\dots,n\}$, each $T'_i$ only contains pre events.
    
    We say that trace replay is \emph{stuck} iff no further trace replay rules are applicable
    and for some $i\in \{1,\dots,n\}$, $T'_i$ contains some post events.
  
  We say that the list of thread-local traces $[\thread{1}{T_1},\dots,\thread{n}{T_n}]$
  enjoys \emph{deterministic replay}
  if for any two exhaustive trace replays
  \bda{c}
      \replay{\sep{\thread{x_1^{k_1}}{B^{k_1}},\dots, \thread{x_m^{k_m}}{B^{k_m}}}
               {[\thread{\clock{1}{\initVC}}{T_1}, \dots, \thread{\clock{n}{\initVC}}{T_n}]}\\}
          {E}
          {\\ \sep{\thread{x_1^{k_1}}{B_1},\dots, \thread{x_m^{k_m}}{B_m}}
            {[\thread{\clock{1}{cs_1}}{T_1'},\dots, \thread{\clock{n}{cs_n}}{T_n'}]}}
   \eda
 and
 \bda{c}
   \replay{\sep{\thread{x_1^{k_1}}{B^{k_1}},\dots, \thread{x_m^{k_m}}{B^{k_m}}}
               {[\thread{\clock{1}{\initVC}}{T_1}, \dots, \thread{\clock{n}{\initVC}}{T_n}]}\\}
          {E'}
          {\\ \sep{\thread{x_1^{k_1}}{B_1'},\dots, \thread{x_m^{k_m}}{B_m'}}
            {[\thread{\clock{1}{cs_1'}}{T_1''},\dots, \thread{\clock{n}{cs_n'}}{T_n''}]}}
    \eda
   we have that
   vector clocks for events at the same program location in $E$ and $E'$ are identical.
 In case of loops, we compare program locations used at the same instance.
\end{definition}

\begin{proposition}[Deterministic Replay for Unbuffered Channels]
\label{prop:sync-det-replay}  
  Let $p$ be a program consisting of unbuffered channels only.
  Then, any list of thread-local run-time traces obtained
  enjoys deterministic trace replay.
\end{proposition}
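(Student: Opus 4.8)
The plan is to show that the trace-replay relation is \emph{confluent} on the traces arising from a program with unbuffered channels only, so that the exhaustive configuration---and with it every per-event vector clock annotation---is uniquely determined, independently of the schedule chosen by \rlabel{Shuffle}. First I would eliminate the buffered machinery. Since $p$ contains no buffered channels, the initial $Q$ is empty and stays empty: the only rules that touch $Q$ are \rlabel{Send} and \rlabel{Receive}, whose premises require a non-empty buffer $\thread{x^k}{B}$, so neither can ever fire. Hence every exhaustive replay uses only \rlabel{Signal/Wait}, \rlabel{Sync}, \rlabel{Receive-Closed}, \rlabel{Close}, and \rlabel{Default}, together with the structural rules \rlabel{Shuffle} and \rlabel{Closure}. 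It is cleanest to reason modulo \rlabel{Shuffle}, treating $U$ as a multiset of thread-local traces, so that \rlabel{Shuffle} collapses into the nondeterministic choice of which substantive rule to apply next and \rlabel{Closure} is just the chaining of single steps.

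The crucial observation is that the matching induced by the trace is \emph{rigid}. Each committed send carries a post event $\postSnd{i_1}{j}{x}$ whose pair $(i_1,j)$ of thread id and program counter is unique across the whole run, and \rlabel{Sync} can consume it only together with the unique receive recording the very same $\postRcv{i_1}{j}{x}$; likewise each $\signalTrace{i}$ is matched by \rlabel{Signal/Wait} to the unique $\waitTrace{i}$ sharing the freshly generated counter $i$. Consequently the set of synchronization steps to be performed, and which one or two trace heads each step consumes, is a function of the given traces alone and does not depend on the schedule.

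With rigidity in hand I would prove confluence via Newman's lemma. Termination is immediate: every substantive rule strictly decreases the total number of events remaining in $U$, and the traces are finite, so the relation (modulo \rlabel{Shuffle}) is strongly normalizing. For local confluence I would establish the key lemma that any two \emph{distinct} enabled steps act on disjoint sets of threads. This is where the main work lies: using rigidity and the fact that each thread's enabled action is fixed by the head of its trace, one must check that no thread can simultaneously participate in two different enabled steps---a send head has exactly one admissible partner, a signal head exactly one wait, and the single-thread rules \rlabel{Close}, \rlabel{Default}, \rlabel{Receive-Closed} are determined outright by the head. Two steps on disjoint threads read and write disjoint components of the configuration, so applying them in either order yields the same resulting configuration and emits the same annotated events, possibly in a different order within $E$, which the statement does not constrain. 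This closes the commuting diagram for every critical pair, giving local confluence and hence, by Newman, a unique normal form.

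Finally, since the exhaustive configuration is unique up to reordering of $E$, and each dynamic event instance is identified by its thread id and program counter, the pre/post vector clock attached to the event at any given program location (per loop instance) coincides in $E$ and $E'$, which is exactly deterministic replay. The main obstacle I anticipate is the disjointness lemma: it rests entirely on the uniqueness of the $(i_1,j)$ and counter labels, and on confirming that the emptiness of $Q$ removes the only source of genuine, non-commuting conflict---the buffer-slot vector clocks propagated by \rlabel{Send} and \rlabel{Receive}, which are precisely what make the buffered case schedule-dependent.
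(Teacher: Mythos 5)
Your proposal is correct and follows essentially the same route as the paper's proof: pass to a liberal formulation that absorbs \rlabel{Shuffle}/\rlabel{Closure}, establish termination (each step consumes events) and local confluence from the rigidity of the send--receive and signal--wait matchings given by the recorded thread id and program counter, then conclude by Newman's lemma that replay is confluent and hence deterministic. Your explicit disjointness lemma for enabled steps is a cleaner articulation of what the paper argues informally when it says the choice of the two affected traces is fixed, but it is the same underlying argument.
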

\begin{proof}
  Trace replay rules define a rewrite relation among configurations $\sep{Q}{U}$.
  The formulation in Figure~\ref{f:trace-replay} assumes that the list of run-time traces
  can be shuffled so that replay rules only operate on the first, respectively, the first
  and the second element in that list.
  In the following, we assume a more liberal formulation
  of trace replay rules where any trace can be picked to apply a rule.
  Both formulations are equivalent and hence enjoy the same properties.
  But the the more liberal formulation
  allows us to drop rules \rlabel{Shuffle} and \rlabel{Closure} from consideration
  and we apply some standard (rewriting) reasoning method.
  We proceed by showing that the more liberal formulation is terminating
  and locally confluent. 

  Termination is easy to establish as each rule consumes at least one event.
  
  Next, we establish local confluence by observing all critical pairs.
  In our setting, critical pairs include configurations $\sep{Q}{U}$ as well as the
  the vector clock annotated events $E$ obtained during rewriting.
  That is, we observe all situations where for $\sep{Q}{U}$ and a single rewrite step we find
  $\replay{\sep{Q}{U}}{E_1}{\sep{Q_1}{U_1}}$ and
  $\replay{\sep{Q}{U}}{E_2}{\sep{Q_2}{U_2}}$
  for some $E_1, E_2, Q_1, Q_2, U_1, U_2$.
  We show that all critical pairs are joinable by examining all rule combinations
  that lead to a critical pair.
  In our setting, joinable means that we find
    $\replay{\sep{Q_1}{U_1}}{E_1'}{\sep{Q_1'}{U_1'}}$ and
  $\replay{\sep{Q_2}{U_2}}{E_2'}{\sep{Q_2'}{U_2'}}$
  where vector clocks for events at the same program location in
  $E_1 \pp E_1'$ and $E_2 \pp E_2'$ are identical.

  As we only consider unbuffered channels, rules \rlabel{Send} and \rlabel{Receive} are not applicable.
  Hence, we only need to consider combinations of rules \rlabel{Signal/Wait}, \rlabel{Sync},
  \rlabel{Receive-Closed}, \rlabel{Close} and \rlabel{Default}.

  Rules \rlabel{Receive-Closed}, \rlabel{Close} and \rlabel{Default} only affect
  a specific run-time trace and the events in that trace.
  So, any combination of these rules that lead to a critical pair is clearly joinable.

  Rules \rlabel{Signal/Wait} and \rlabel{Sync} affect two run-time traces.
  The choice which two traces are affected is fixed.
  For each (synchronous) sent that took there is exactly one matching receive and vice versa.
  This is guaranteed by our tracing scheme where we identify send-receive pairs
  via the sender's thread id and program counter.
  The same applies to signal and wait.
  Hence, any critical pair that involves any of these two rules is joinable.

  We summarize. The more liberal rules are terminating
  and locally confluent. By Newmann's Lemma we obtain confluence.
  Any derivation with rules in Figure~\ref{f:trace-replay} can be expressed in terms of the more liberal rules.
  Hence, the rules in Figure~\ref{f:trace-replay} are confluent.
  Confluence implies deterministic trace replay.
\end{proof}

Recall Example~\ref{ex:sync-chan} where the run-time trace records that (a) thread~2 synchronizes
with thread~3, and (b) thread~4 synchronizes with thread~5. The actual schedule (if first (a) or (b)) is
not manifested in the trace. Indeed, if (a) or (b) first does not affect the vector clock information obtained.

\begin{proposition}
\label{prop:unbuffered-time-complexity}  
  Let $p$ be a program consisting of unbuffered channels only
  where for some instrumented program run we find that
  $k$ is the number of thread-local run-time traces
  and $m$ is the sum of the length of all thread-local run-time traces.
  Then, vector clock annotated events can be computed in time $O(k^2 * m)$.
\end{proposition}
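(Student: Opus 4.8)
The plan is to invoke Proposition~\ref{prop:sync-det-replay}: since for programs with unbuffered channels only the annotation is independent of the schedule, I am free to fix one convenient evaluation strategy and bound its cost. Concretely, I would analyze the strategy that repeatedly scans the thread-local traces for an applicable replay rule and applies it. Because only unbuffered channels occur, rules \rlabel{Send} and \rlabel{Receive} never fire, so the relevant rules are \rlabel{Signal/Wait}, \rlabel{Sync}, \rlabel{Receive-Closed}, \rlabel{Close}, and \rlabel{Default}. I would split the running time into two contributions: the number of rule applications together with the per-application vector-clock arithmetic, and the cost of locating the next applicable rule.

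First I would bound the number of rule applications. As in the termination argument used for Proposition~\ref{prop:sync-det-replay}, each rule consumes at least one recorded event, and $m$ is the total number of recorded events over all traces, so there are at most $O(m)$ applications. Each vector clock is a list of $k$ entries (one per thread), hence the primitives $\incC{\cdot}{\cdot}$ and $\maxC{\cdot}{\cdot}$, as well as copying a clock, each cost $O(k)$. Every rule performs only a constant number of such operations and emits a constant number of committed annotated events, except that \rlabel{Sync} additionally emits one not-selected annotation per pre-event entry. Since the number of cases in any select is a program-level constant, the total number of such entries, and hence of emitted annotations, is $O(m)$, and writing their length-$k$ clocks costs $O(k\,m)$. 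Thus the arithmetic contributes $O(k\,m)$.

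Next I would bound the search cost, which is where the extra factor of $k$ arises. To apply \rlabel{Sync} (or \rlabel{Signal/Wait}) I must locate two traces whose leading events form a matching pair. Matching is an $O(1)$ test: a send and receive are paired exactly when the receiver's recorded thread id and program counter in $\postRcv{i_1}{j}{x}$ coincide with the sender's $\postSnd{i_1}{j}{x}$, and signal/wait are paired by their index. The naive realization inspects the leading events of all $k$ traces and checks the $O(k^2)$ candidate pairs at $O(1)$ each, so finding the next applicable rule costs $O(k^2)$. Over $O(m)$ applications this yields $O(k^2\,m)$, which dominates the $O(k\,m)$ arithmetic and gives the claimed bound.

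The main obstacle, and the point needing the most care, is the accounting of the emitted not-selected annotations: one must confirm that the pre-event lists do not secretly contribute more than $O(m)$ annotated events, so that the arithmetic stays within $O(k\,m)$, and that determinism genuinely lets us drop the \rlabel{Shuffle} and \rlabel{Closure} bookkeeping from the \emph{cost} model and not merely from the result. Everything else—the $O(m)$ bound on applications from the consume-an-event invariant, the $O(k)$ cost of the clock primitives, and the $O(k^2)$ pair scan—is routine once the strategy is fixed.
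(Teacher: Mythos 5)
Your proposal is correct and follows essentially the same route as the paper's own proof: invoke Proposition~\ref{prop:sync-det-replay} to fix a schedule without risk of getting stuck, bound the number of rule applications by $O(m)$ since each application consumes at least one recorded event, and charge $O(k^2)$ per application for locating the two traces whose leading events match (via the $O(1)$ thread-id/program-counter test), giving $O(k^2 \ast m)$. The only difference is that you additionally account for the $O(k)$ vector-clock arithmetic and the not-selected annotations emitted by \rlabel{Sync}, details the paper's proof leaves implicit; these refinements only tighten the same argument rather than change it.
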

\begin{proof}
  Proposition~\ref{prop:sync-det-replay}  guarantees that any order in which trace replay
  rules are applied yields the same result and most importantly we never get stuck.
  Rules \rlabel{Signal/Wait} and \rlabel{Sync} need to find two matching partners
  in two distinct traces. All other rules only affect a single trace.
  Hence, each rewriting step requires $O(k^2)$. The number of possible combinations of two elements from
  a set of size $k$.
  Each rewriting step reduces the size of at least one of the thread-local traces.
  Hence, we must obtain the result in $O(m)$ steps.
  So, overall computation of $E$ takes time $O(k^2 * m)$.
\end{proof}


The situation is different for buffered channels.

\begin{example}
\label{ex:buffer-one}
Consider 
\bda{lcl}
    [ x \assign \MAKECHAN{1},
      \\ \GO\ [\SEND{x}{1},            && (1)
        \\ \ \ \ \ \ \ \ \ \ \ \ \ \ \ \RCV{x}],           && (2)
      \\ \SEND{x}{1},          &&  (3)
      \\ \RCV{x}]              &&  (4)
\eda
We assume that the send and receive in the helper thread execute first.
Here is the resulting trace.
\bda{l}
    [\thread{1}{[\signalTrace{2}, \pre{\snd{x}}, \postSnd{1}{3}{x}, \pre{\rcv{x}}, \postRcv{1}{3}{x}]},
\\ \thread{2}{[\waitTrace{2}, \pre{\snd{x}}, \postSnd{2}{1}{x}, \pre{\rcv{x}}, \postRcv{2}{1}{x}]}]                
\eda
For example, we obtain $\postSnd{1}{3}{x}$ as thread 1's program counter is at position 3
after execution of the \MAKE\ and \GO\ statement.

The initial buffer is of the form $[\bot^{[0,0]}]$.
We infer different vector clocks depending which local trace we process first.
If we start with thread~$1$ we derive vector clock $[2,0]$ for location (1),
$[3,0]$ for location (2), $[3,2]$ for location (3) and $[3,3]$ for location (4).
If we start with thread~$2$, we obtain vector clock $[2,0]$ for location (3),
$[3,0]$ for location (4), $[3,2]$ for location (1) and $[3,3]$ for location (2).
\end{example}

We conclude that replay is non-deterministic for buffered channels.
A different schedule possibly implies a different vector clock annotation.

\begin{proposition}
\label{prop:enumerate-tracereplay}  
  Let $p$ be a program consisting of buffered channels
  where for some instrumented program run we find that
  $k$ is the number of thread-local run-time traces
  and $m$ is the sum of the length of all thread-local run-time traces.
  Then, we can enumerate all possible vector clock annotations in time $O(k^{(2 * m)})$.
\end{proposition}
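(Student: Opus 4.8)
The plan is to turn the replay relation into an explicit search tree over all maximal rewrite sequences and to bound its size. As in the proof of Proposition~\ref{prop:sync-det-replay}, I would first pass to the more liberal formulation of the rules in which \rlabel{Shuffle} and \rlabel{Closure} are dropped and any trace (or pair of traces, together with the appropriate leading buffer) may be selected to fire a rule; since the two formulations generate the same rewrite relation, nothing is lost. Enumerating all possible vector clock annotations then amounts to exploring, by depth-first search from the initial configuration of Definition~\ref{def:det-comm}, every sequence of rule applications and collecting the annotation $E$ emitted along each maximal path. Unlike the unbuffered case, confluence genuinely fails here — Example~\ref{ex:buffer-one} exhibits two schedules yielding different clocks — so we must branch rather than commit to one order.

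Next I would bound the two dimensions of this tree. For the depth, observe that every replay rule consumes at least one leading event from at least one trace (rule \rlabel{Close} removes exactly one, all others remove more), so the total length of the thread-local traces strictly decreases at each step. As this length starts at $m$, every root-to-leaf path contains at most $m$ rule applications; this is precisely the termination argument already used for Proposition~\ref{prop:sync-det-replay}. For the branching, note that a rule instance is fully determined once we fix the at-most-two traces it acts on: the single-trace rules \rlabel{Receive-Closed}, \rlabel{Close} and \rlabel{Default} offer at most $k$ choices, while the two-trace rules \rlabel{Signal/Wait} and \rlabel{Sync} offer at most one instance per pair of traces, i.e.\ $O(k^2)$. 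The buffered rules \rlabel{Send} and \rlabel{Receive} do not add a separate factor for the channel, because the buffer involved is forced by the leading send/receive event of the chosen trace; hence they branch only $O(k)$ ways. Thus at most $O(k^2)$ children hang off each node.

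Combining the two bounds, the search tree has at most $\sum_{i=0}^{m}(k^2)^i = O(k^{2m})$ nodes, and a single traversal visits each node once and therefore produces every reachable maximal configuration. Leaves corresponding to \emph{exhaustive} replays each contribute one candidate annotation; leaves that are \emph{stuck} in the sense of Definition~\ref{def:det-comm} are discarded and only improve the bound. Since the set of distinct annotations is a subset of the leaves, all possible vector clock annotations are produced in time $O(k^{2m})$.

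I expect the step requiring the most care to be the branching count. The two points to get right are, first, that the buffered channel in \rlabel{Send} and \rlabel{Receive} is determined by the leading event of the selected trace rather than chosen freely, so the number of distinct channels never enters as an extra multiplicative factor; and second, that it is the selection of an (ordered or unordered) \emph{pair} of traces for \rlabel{Sync} and \rlabel{Signal/Wait}, rather than a single rewriting position, that yields the exponent $2m$ instead of $m$. A secondary subtlety is completeness of the enumeration: one must confirm that every reachable exhaustive configuration is the leaf of some path in the liberal formulation, which follows from the equivalence of the liberal and original formulations already established for Proposition~\ref{prop:sync-det-replay}.
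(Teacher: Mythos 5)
Your proposal is correct and follows essentially the same argument as the paper: the paper's (much terser) proof likewise bounds the branching at each step by $O(k^2)$ choices and the number of steps by $O(m)$, yielding the $O(k^{(2*m)})$ bound for exhaustive enumeration. Your elaboration of the per-rule branching factors, the appeal to the liberal formulation from Proposition~\ref{prop:sync-det-replay}, and the handling of stuck versus exhaustive leaves fills in details the paper leaves implicit, but does not change the underlying approach.
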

\begin{proof}
  In each step there are $O(k^2)$ choices to consider that might lead to a different result.
  We need a maximum of $O(m)$ steps.
  Hence, exhaustive enumeration takes time $O(k^{(2 * m)})$.
\end{proof}  

In practice, we find rarely cases of an exponential number of schedules.
As we are in the offline setting, we argue that some extra cost is justifiable
to obtain more details about the program's behavior.


We summarize.
The vector clock annotated trace $E$ contains a wealth of information.
In the upcoming section, we will discuss some specific analysis scenarios
where this information can be exploited.
For certain scenarios, the complete trace $E$ is often not necessary
and for vector clocks a more memory-saving representation can be employed as well.

\section{Analysis Scenarios}
\label{sec:analysis}

We consider four scenarios.
\begin{description}
 \item[MP] Message contention.
 \item[SC] Send  on a closed channel.
 \item[DR] Deadlock recovery.
 \item[AC] Alternative communication partners for send/receive pairs.
\end{description}

MP is to identify performance bottlenecks. This method
can also be used to carry out lock contention.
SC spots a bug whereas DR provides hints to a user how to recover from a deadlock.
AC provides general information about the concurrent behavior of message-passing programs.
Below, we describe how we can implement each scenario based on the vector clock
information provided.
Realistic examples for each scenario will be discussed in Section~\ref{sec:experiments}.

\subsection{Message Contention}
\label{sec:message-contention}

\boxfig{f:epoch-send-receive}{Send/Receive Epoch}{
  \bda{c}
  \ruleform{\replay{\sep{C}{E}}{}{\sep{C}{E}}}
   \\
 \\           
  \rlabel{Channel-Init} \
  \replay{\sep{C}{\clock{\initEvt{i}{x}}{cs} : E}}
         {}{\sep{\varState{x}{cs}{cs} : C}{E}}
  \\
  \\
  \rlabel{Receive-Multiple} \
 \myirule{\mbox{$J = \{ i_1, \dots, i_k \} \subseteq \dom{es}$ maximal and non-empty such that $\forall j \in J. cs[j] \not > es[j]$}          
         }
         {
           \replay{\sep{\varState{x}{es}{es'} : C}{\ppclock{\rcvEvt{i}{x}}{cs}{cs'} : E}}
                  {}
                  {\sep{\varState{x}{[\thread{i}{cs[i]},\thread{i_1}{es[i_1]},\dots,\thread{i_k}{es[i_k]}]}{es'} : C}{E}}
         }
  \\
  \\
 \rlabel{Receive-Single} \
 \myirule{cs > es
         }
         {\replay{\sep{\varState{x}{es}{es'} : C}{\ppclock{\rcvEvt{i}{x}}{cs}{cs'} : E}}
                 {}
                 {\sep{\varState{x}{\thread{i}{cs[i]}}{es'} : C}{E}}
         }
  \\
  \\
  \rlabel{Send-Multiple} \
 \myirule{\mbox{$J = \{ i_1, \dots, i_k \} \subseteq \dom{es}$ maximal and non-empty such that $\forall j \in J. cs[j] \not > es[j]$}          
         }
         {
           \replay{\sep{\varState{x}{es'}{es} : C}{\ppclock{\sndEvt{i}{x}}{cs}{cs'} : E}}
                  {}
                  {\sep{\varState{x}{es'}{[\thread{i}{cs[i]},\thread{i_1}{es[i_1]},\dots,\thread{i_k}{es[i_k]}]} : C}{E}}
         }
  \\
  \\  
 \rlabel{Send-Single} \
 \myirule{cs > es
         }
         {\replay{\sep{\varState{x}{es'}{es} : C}{\ppclock{\sndEvt{i}{x}}{cs}{cs'} : E}}
                 {}
                 {\sep{\varState{x}{es'}{\thread{i}{cs[i]}} : C}{E}}
         }
  \eda
}  

We wish to check if there are competing send operations for a specific channel $x$.
If we also take into account dangling events, we simply consult $E$
and count all events $\ppclock{\sndEvt{n_i}{x}}{cs_i}{cs_i'}$
where for $i \not= j$ we have that
%
pre vector clocks $cs_i$ and $cs_j$ are incomparable (the events are concurrent).
The same check applies to receive events.

To carry out the analysis efficiently it is unnecessary to
construct the entire set $E$.
For each channel $x$, we only need to keep track of concurrent sends/receives.
For each concurrent operation, instead of the full (pre) vector clock, we only record
a pair of thread id and the time stamp for that thread.
We refer to this pair as an epoch
following~\cite{flanagan2010fasttrack}.

\begin{definition}[Send/Receive Epoch]
  \bda{lcll}
    e & ::= & \thread{i}{n} & \mbox{Epoch}
  \\
  es & ::= & [] \mid e : es
  \\
  C & ::= & [] \mid \varState{x}{es}{es} : V   & \mbox{Senders/Receivers}
\eda
\end{definition}
%

Notation $es$ denotes a list of epochs. 
In the extreme case, $es$ denotes the entire vector clock.
For example, the channel's initial state corresponds to the vector clock
at the declaration site.
If $es$ covers all threads, we treat $es$ as the 
sorted (according to thread ids) list $[\thread{1}{n_1},\dots,\thread{k}{n_k}]$
and consider $es$ as equivalent to $[n_1,\dots,n_k]$.

Let $es = [\thread{i_1}{n_1},\dots,\thread{i_k}{n_k}]$.
We define $\dom{es} = \{ i_1,\dots,i_k \}$
and $es[i_l] = n_l$ for $i_l \in \dom{es}$.
We define $cs > es$ if for all $i \in \dom{es}$ we have that $cs[i] > es[i]$.
We define $cs \not> es$ if there exists $i \in \dom{es}$ such that $cs[i] \not> es[i]$.

For convenience, we carry out the epoch optimization on the annotated trace $E$ instead of
adjusting the rules in Figure~\ref{f:trace-replay}.
In practice, the epoch optimization can be integrated into the trace replay rules.
The epoch optimization rules in Figure~\ref{f:epoch-send-receive} introduce a rewrite relation among $\sep{C}{E}$
where for brevity we omit auxiliary rules to drop irrelevant events (close and default) and rearrange $C$ suitably.
Rule \rlabel{Channel-Init} deals with the initialization of a channel
and assumes that the declaration site is instrumented
such that we can obtain $x$'s initial vector clock. For this purpose,
we assume an event $\clock{\initEvt{i}{x}}{cs}$.

Rule \rlabel{Receive-Multiple} covers multiple concurrent receives.
We build the maximum number of receives that are concurrent
to the current event.
Rule \rlabel{Receive-Single} covers the case of a receive that happens after all prior receives.
The formulation for send events is orthogonal to receives.
See rules \rlabel{Send-Multiple} and \rlabel{Send-Single}.

High message contention means that for $\varState{x}{es_1}{es_2}$
there is high number of elements in either $es_1$ or $es_2$.

\subsection{Send on Closed}
\label{sec:send-closed}

We wish to check if there is a schedule where a send operation
attempts to transmit to a closed channel. We assume that this bug did not arise
for the given program run. Hence, we use the vector clock information to test for
a send operation that either succeeds or is concurrent to a close operation.
In terms of $E$, we check for
 events $\ppclock{\sndEvt{i}{x}}{cs_1}{cs_2}$, $\clock{\closeEvt{j}{x}}{cs_3}$
where either $cs_1$ succeeds $cs_3$ or $cs_1$ and $cs_3$ are incomparable.

The epoch optimization applies here as well.
We maintain the list of concurrent send operations as in case of message contention.
Instead of the list of receives, we only record the vector clock of the close operation.
Each time we update the list of sends, we check that none of the send epochs succeeds
or is concurrent to the close.

As discussed in Section~\ref{sec:properties}, the vector clock annotation
obtained for events may be non-deterministic.
Hence, we may miss a send on closed channel bug depending on the schedule.
The advantage of our method is to explore alternative schedules to reveal
such hidden bugs.

\begin{example}
  \label{ex:send-on-closed-two}
  Consider
\bda{lcl}
    [ x \assign \MAKECHAN{1}, 
      \\      \GO\ [\SEND{x}{1}, \RCV{x},\CLOSE{x}],
      \\ \SEND{x}{1}, \RCV{x}]

\eda
We assume a program run where first the main thread executes and then the other thread.
This yields the following run-time trace.
\bda{l}
    [\thread{1}{[\signalTrace{2},  \pre{\snd{x}}, \postSnd{1}{3}{x}, \pre{\rcv{x}}, \postRcv{1}{3}{x}]},
      \\  \thread{2}{[\waitTrace{2}, \pre{\snd{x}}, \postSnd{2}{1}{x}, \pre{\rcv{x}}, \postRcv{2}{1}{x},\postClose{x}]}]
\eda
In our approach, we can explore a different schedule by processing thread~2 (after processing of signal/wait).
Then, the close operation appears to be concurrent to the send in the main thread.
\end{example}

\subsection{Alternative Communications}

We refer to a \emph{match pair} as a pair of vector clock
annotated events $(e_1,e_2)$ where $e_1$ is a sender and $e_2$
is the matching receiver over a common channel $x$.
For an unbuffered (synchronous) channel,
we have that $e_1 = \ppclock{\sndEvt{i}{x}}{cs_1}{cs}$
and $e_2 = \ppclock{\rcvEvt{j}{x}}{cs_2}{cs}$.
That is, their post vector clocks are synchronized.
For a buffered channel,
we have that $e_1 = \ppclock{\sndEvt{i}{x}}{cs_1}{cs_1'}$
and $e_2 = \ppclock{\rcvEvt{j}{i}{x}}{cs_2}{cs_2'}$
where $cs_2' = \maxC{\incC{j}{cs}}{cs_1'}$.

Match pairs can be directly computed during trace replay
as their underlying post events are uniquely connected via
the sender's thread id and program counter.
We assume $k$ is the number of thread-local run-time traces
and $m$ is the sum of the length of all thread-local run-time traces.
For each match pair we compute \emph{alternative communications}.
For each sender, we count the number of concurrent receives
and for each receiver the number of concurrent sends where
for each candidate we test if the pre vector clocks are incomparable.
We assume the comparison test among vector clocks takes constant time.
That is, $O(k) = O(1)$.
For each match pair, there can be at most $O(m)$ candidates.
Hence, computations of alternative communications for a specific trace replay
run takes time $O(m * m)$.
Our experiments show that alternatives can be computed efficiently as 
we can use the thread's vector clock to prune the search space
for candidates.

Via a similar method, we can compute the number of alternatives for
each \emph{not selected} case of a \SELECT\ statement.
We refer to this analysis scenario as {\bf ASC}.

\subsection{Deadlock Recovery}
\label{sec:deadlock-recovery}

We consider the scenario where program execution results in a deadlock.
Via a similar method as described for alternative communications,
we can search for potential partners for dangling events.

\begin{example}
  Recall the example from the introduction.
  \bda{lcl}
      [ x \assign \SYNCMAKECHAN,
        \\ \GO\ [\SEND{x}{1}],  && (1)
        \\ \GO\ [\RCV{x}],  && (2)
        \\ \SEND{x}{1} ]  && (3)        
  \eda
  We assume a deadlock because (1) synchronizes with (2).
  Based on the pre vector clock information of the event resulting from (3),
  we can feedback to the user that the deadlock could possibly be resolved
  assuming (3) synchronizes with (1).
\end{example}  

There are cases where no alternatives can be provided.

\begin{example}
  Consider the classic example of a deadlock due to reversed `lock' order
  where we model a mutex via a buffered channel.  
  \bda{lcl}
      [ x \assign \MAKECHAN{1}, y \assign \MAKECHAN{1},
        \\ \GO\ [\SEND{y}{1}, \SEND{x}{1}, \RCV{x}, \RCV{y}],
        \\ \SEND{x}{1}, \SEND{y}{1}, \RCV{y}, \RCV{x}]
  \eda
  Our analysis reports that no alternatives exist.
\end{example}

The interpretation of analysis is left to the user.
We believe the information provided are highly useful
in gaining further insights into the (deadlock) bug.

\section{Experiments}
\label{sec:experiments}


We have a built a  prototype in Go.
A snapshot of our implementation including all examples used for experimentation can be accessed via
\begin{verbatim}
    https://github.com/KaiSta/gopherlyzer-GoScout
\end{verbatim}
Our implementation includes analysis methods and optimizations discussed in the earlier sections.
Experiments are conducted on a Intel i7 6600U with 12GB RAM, SSD and Windows 10.
The results are shown in Figure \ref{f:testResults}.

\subsection{Implementation}

The toolchain, entirely implemented in Go, consists of three parts.
(1) Instrumentation. (2) Execution. (3) Analysis.
In the first part, we instrument the source code  to emit pre and post events.
As we need to provide an additional argument for channels and channel operations,
we need to access the channel's type.
For this purpose,  we make use of go-parser to  update the AST by updating the channel's type to an anonymous struct that
contains a field for the necessary thread information and the original type of the channel as a value field.
The instrumentation follows the scheme outline in Figure~\ref{f:instrumentation}.
Additionally the main function is instrumented to start and stop the tracer.
This is necessary to ensure that all occurred events are written to the trace since Go programs terminate as soon as the main thread exits.
The tracer uses a separate thread that receives the events through a buffered channel and writes them to the trace which is necessary in case of a deadlock in the program.
The separate thread will write all events stored in the channel buffer to the trace before it waits for new messages which will trigger the occurred deadlock with a small delay.
After execution, we apply the trace replay method described previously.

\subsection{Examples}

For experimentation, we use the following examples.
They consist of some real-world examples as well as our own examples
to highlight certain aspects of our approach.

\subsubsection*{pgzip} is a parallel gzip compression/decompression written in Go. It splits the file in several blocks that are send through a buffered channel where the worker threads can collect and compress them. After compressing a block it is send through another channel to a thread that collects the blocks to write them to a file. pgzip makes typical use of synchronous and asynchronous channels to either transfer data or to send signals to other threads like `abort' by closing specific channels. We intend to test if the collection always has to happen in a fixed order and for sends on closed channels due to the way `aborts' are implemented. For the test we compress a 8mb file. (\url{https://github.com/klauspost/pgzip})

\subsubsection*{htcat} performs parallel, pipelined executions of a single HTTP `GET' to improve the download speed. It distributes the work to multiple threads that perform a part of the download and collects the finished blocks in a separate thread. It uses a different scheme to recollect the blocks that we test with our analyses. For the test a 8mb download was used. (\url{https://github.com/htcat/htcat})

\subsubsection*{go-dsp/fft} is a digital signal processing package for Go where we test the parallel FFT implementation. Due to the intensive use of channels it produces huge traces with a medium amount of alternative communications. For the test we used the unit tests delivered with the package.(\url{https://github.com/mjibson/go-dsp})

\subsubsection*{go-hashmap} is included in the Go programming language since version 1.9. It is a thread safe alternative to the previous map implementation. It is optimized for two uses cases where the first is that keys are written once but read many times by different threads and second for multiple threads that each work with their own key without touching the keys of another thread. For all other use cases it becomes too slow because of the lock contention that occurs. We use our analyses to make this lock contention visible. For this we use 3 writer and two reader threads that work on the same 100 keys.(\url{https://golang.org/src/sync/map.go})

\subsubsection*{newsreader} is a artificial example for a program that runs into a deadlock with detectable alternative communications that indicate that the deadlock might not be the only possible outcome of running this program.

\subsubsection*{cyclic} contains a cyclic dependency for the messages that are send between the threads. Hence it runs into a deadlock which is unavoidable and therefore our analysis will not report any alternative communications.

\subsection{Tracing Overhead}

The tracing overhead is between 2 and 41\%.
The two real world programs pgzip and htcat have the lowest overhead since they spend most of their time with reading and writing files. go-dsp/fft and the go hashmap have both an overhead of around 40\% because of their intensive use of channels and/or locks.

\subsection{Message and Lock Contention}

Our analyses shows for the concurrent hashmap over 13000 message contention situations that can only occur for the included locks that allow a single thread to make a safe read or write. This shows that nearly every access needs to use the slow path using locks to complete its tasks which results in a huge slowdown. 

\subsection{Alternative Communications}

The pgzip and htcat examples use worker threads that collect the completed blocks and put them together. They must make sure that the blocks are collected in a specific order to be able to put them back together. It would have been an error to see alternative communications for those specific receives that collect the blocks. In both cases the important receives can always only receive from a single thread at a time and therefore will always put the data back together in the right order.

\subsection{Send on Closed}

All tested real world examples that use the close operations are not prone to this kind of error in the current state. This is mostly because of their use of a separate channel that is never used except for closing it to signal another thread. For our set of real world examples only pgzip closes a channel that was previously used to send and receive data.

\subsection{Deadlock Recovery}

For the cyclic dependency we used the standard double lock example where two threads lock two different locks in different orders. As our analysis shows their are no alternative matching partners but 2 cases of message contention since both threads are concurrent alternatives for each locking operation.
The newsreader example on the other hand has 10 alternative matching partners and 4 cases of message contention during the run. These show that the first call to collect the news consumes both messages since it has all the possible alternatives while the second call is stuck.

\subsection{Results}

The behavior of our test cases is largely independent of the schedule chosen.
Therefore, results are reported for a specific schedule (vector clock annotation).
For each example, we accumulate the number of alternatives found. The information that four alternative communications (AC) were found means that we either detected four alternatives for a single communication pair or for two communication pairs each two alternative communications for example. The same is true for MP and ASC. SC counts the sends that might occur on a closed channels and DR if it is a analysis on a deadlocked program. For the additional field ASC (Alternative Select Case) each communication pair where one communication partner is a select, we count the amount of alternative select cases available.

\boxfig{f:testResults}{Test Results}{	
	\begin{minipage}{\textwidth}
	
	\centering
	\begin{tabular}{l|l|l|l|l|l|l|l|l}
		\textbf{Program}    & \textbf{LOC}  & \textbf{Trace Size} & \textbf{Time}     & \textbf{AC}      & \textbf{MP}    & \textbf{ASC} & \textbf{SC} & \textbf{DR}   \\
		pgzip      & 1201 & 468        & 37ms     & 1386    & 0     & 55  & 0  &      \\
		htcat      & 728  & 4263       & 186ms    & 2034    & 86    & 21  & 0  &      \\
		go-dsp/fft & 843  & 99087      & 516.8sec & 8458    & 2558  & 0   & 0  &      \\
		go-hashmap & 421  & 19363      & 9713ms   & 1443130 & 13039 & 0   & 0  &      \\
		newsreader &  27  &    28        &    3ms      &    10     &   4    &    0 &  0  & true \\
		cyclic     &  25  &     10      &   4ms   &   0      &   2    &  0   &  0  & true
	\end{tabular}
\end{minipage}
}

\section{Related Work}
\label{sec:related-work}

\mbox{} \\
\noindent
{\bf Verification of Go programs.}
There are several recent works that consider
the static verification of Go programs.
Work by Ng and Yoshida~\cite{DBLP:conf/cc/NgY16}
and our own prior work in collaboration with Thiemann~\cite{DBLP:conf/aplas/StadtmullerST16}
considers deadlock detection.
In addition to safety properties such as deadlocks,
the work by Lange, Ng, Toninho and Yoshida~\cite{DBLP:conf/popl/LangeNTY17}
also considers liveness properties.
Recent work by Lange, Ng, Toninho and Yoshida~\cite{Lange:2018:SVF:3180155.3180157}
employs behavioral types to capture an even richer set
of liveness and safety properties.
A common issue with static analysis is scalability.
Experiments reported in~\cite{Lange:2018:SVF:3180155.3180157} only cover programs
with a small portion of the program related to concurrency.

Dynamic analysis are more likely to scale to real-world programs.
In the Go context, we are only aware of two works that support the dynamic analysis of concurrent Go.
The Go programming language includes a dynamic data race detection tool~\cite{gorace}
but does not offer any form of analysis for message-passing like we discuss here.
Our own work~\cite{DBLP:conf/hvc/SulzmannS17} introduces a trace-based method to observe
the run-time behavior of (synchronous) message-passing Go programs.
In this work, we include the proper treatment of buffered channels and close operations.
We introduce the idea of pre vector clocks and consider
several analysis scenarios including the epoch optimization.
A further significant difference to the present work is that in~\cite{DBLP:conf/hvc/SulzmannS17} we
derive a dependency graph to capture the happens-before relation among events.
The dependency graph appears to be less efficient and precise compared to vector clocks.

\mbox{} \\
\noindent
{\bf Actor model~\cite{Agha:1986:AMC:7929}.}
The actor model supports a more restricted form of message-passing where channels
are associated to actors (mailboxes).
Hence, there can be multiple senders but only a single receiver.
The work summarized in~\cite{DBLP:conf/rv/CassarFAAI17} discusses a series of tools for Erlang~\cite{Armstrong:2013:PES:2566708}
to monitor properties specified in a dynamic logic.
Further works in the actor setting consider systematic testing methods such that (unit) tests catch a bug.
For example, see \cite{Tasharofi:2013:BCA:3107656.3107674,DBLP:conf/cc/AlbertGI16,DBLP:conf/icst/ChristakisGS13}.
Our focus is to examine in detail a specific execution run
for which we provide several analysis scenarios.

\mbox{} \\
\noindent
{\bf Message-passing interface (MPI)~\cite{Forum:1994:MMI:898758}.}
MPI supports an asynchronous form of receive.
This leads to issues when using vector clocks (happens-before relation).
See \cite{DBLP:conf/IEEEpact/VoGKSSB11,Vo:2011:SFD:2231450} for an in-depth discussion.
In Go, receive must either synchronize via a sender or via a buffer.
Hence, the use of vector clocks poses no problem.



\mbox{} \\
\noindent
{\bf Message-passing a la CSP~\cite{Hoare:1978:CSP:359576.359585}.}
 The work by Fidge~\cite{fidge1988timestamps} and Mattern~\cite{Mattern89virtualtime}
 shows how to compute vector clocks in the message-passing setting.
 Besides vector clocks for post (committed) events, we introduce
 the idea of vector clocks for pre events (possibly can commit)
 and give a precise treatment of buffered channels.
 Both extensions can lead to improved analysis results.

Netzer~\cite{DBLP:conf/lcpc/Netzer93,DBLP:conf/sc/NetzerM92} traces
the event order at run-time via Fidge-style vector clocks
in the context of message-passing.
However, it is unclear if his system is able to support multiple channels of different
kinds (synchronous/asynchronous) as supported in our approach.


Ronsse and co-workers~\cite{Ronsse:2003:DSM:860016.860024,Ronsse:1999:RFI:312203.312214}
uses a two-level tracing approach.
First, the program is instrumented to trace synchronization points based on thread-local timestamps.
Based on this trace information, the program is then instrumented such that the previous trace can be
recreated during execution. During this (replay) execution run, vector clocks are used to
infer the order among events.
The approach we propose is more flexible in that based on a single trace, alternative schedules (vector clock annotations)
can be derived. Thus, we can identify problems that might not be obvious
based on the actual program run.

\mbox{} \\
\noindent
{\bf Run-time tracing.}
We make use of a fully automatic light-weight instrumentation and tracing scheme
derived from our own prior work~\cite{DBLP:conf/hvc/SulzmannS17}.
Earlier work uses frameworks~\cite{Kiczales:2001:OA:646158.680006}
that instrument at the byte-code level
or require to adapt the run-time.
Unlike the work of Bhansali and others \cite{Bhansali:2006:FIT:1134760.1220164},
we do not include time stamp information during tracing.




\mbox{} \\
\noindent
{\bf Weakening happens-before and predictive analysis.}
A well-known issue with vector clocks is that the happens-before
relation obtained is tied to a specific execution run (trace).
Hence, we might miss a bug in our program that would become obvious
assuming an alternative schedule, i.e.~suitable reordering of the trace.

In the context of shared memory, there are works
that weaken the happens-before relation~\cite{Smaragdakis:2012:SPR:2103621.2103702,Kini:2017:DRP:3140587.3062374},
pursue alternative schedules in parallel~\cite{DBLP:journals/sttt/SenRA06},
consider all possible reorderings~\cite{Huang:2014:MSP:2666356.2594315}
and take into account source code information~\cite{Wang:2009:SPA:1693345.1693367}.
An issue in the shared memory setting is to maintain the write-read dependency
among shared variables to avoid false positives.
If due to rescheduling a different value could be read, the analysis is non-predictive
as such a program run may never be possible.
For example, the work by Huang, Luo and Rosu~\cite{Huang:2015:GGP:2818754.2818856} traces
values to guarantee that write-read dependencies are respected.

We employ the sender's thread id and program counter to establish a unique connection between sender and receiver.
This guarantees predictability of our analysis.
The tracing overhead is fairly low as supported by our experimental results.
Any vector clock annotation obtained
via trace replay corresponds to a valid schedule.
Hence, we can avoid the bias towards a specific execution run (trace)
by employing thread-local traces.
We can explore alternative schedules by enumerating all possible schedules/vector clock
annotations during trace replay.

In the MPI setting, the works~\cite{Forejt:2014:PPA:2962288.2962307,YuHuang16}
use model checking ideas to predict devious schedules based on a single trace.
As mentioned above, some variant of MPI's receive operation differs from Go.
Furthermore, MPI demands that each send is directed to a specific thread only.
This certainly makes the MPI problem more feasible as the search space for possible schedules
is reduced.



\section{Conclusion}
\label{sec:conclusion}

We employ a two-phase method for the dynamic analysis of message-passing Go programs.
The first phase, instrumentation and tracing, has a fairly low run-time overhead which is supported
by our experiments.
The second phase analyzes the recorded traces and  recovers vector clock information.
The analysis phase can be tailored to find bugs and identify performance bottlenecks.
A key feature of our approach is the use of thread-local traces.
Thus, we can observe behavior that might result from alternative schedule.
In future work, we plan to investigate further heuristics to detect devious schedules
and additional user scenarios to exploit the information inferred.

\begin{acks}
We thank some ISSTA'18 and PPDP'18 reviewers for their comments.

\end{acks}


\bibliography{main}

\appendix

\section{Trace Replay for Buffered Channels}
\label{app-non-backtrack-replay}

We consider the issue of identifying alternative schedules.
Example~\ref{ex:buffer-one} suggests that a possible candidate
for an alternative schedule arises if rule \rlabel{Send}
is applicable on two different thread-local traces that share
the same channel.
We ask the question, is this a sufficient condition to identify some alternative schedule?
The answer is no as shown by the following example.

\begin{example}
  \label{ex:trace-replay-stuck}
  Consider
\bda{lcl}
    [ x \assign \MAKECHAN{2},
      \\ \GO\ [\SEND{x}{1}],   && (1)
      \\ \GO\ [\SEND{x}{1}],   && (2)
      \\ \RCV{x},              && (3)
      \\ \RCV{x} ]             && (4)

\eda
where we assume a program run where send at location (1) is executed first,
then the receive at location (3) proceeds, followed by the send at location (2).
Finally, the receive at location (4) is executed.
Here is the resulting run-time trace.
\bda{l}
    [\thread{1}{[\signalTrace{2}, \signalTrace{3},
        \pre{\rcv{x}}, \postRcv{2}{1}{x},
        \pre{\rcv{x}}, \postRcv{3}{1}{x}]},
    \\ \thread{2}{[\waitTrace{2}, \pre{\snd{x}}, \postSnd{2}{1}{x}]},
    \\ \thread{3}{[\waitTrace{3}, \pre{\snd{x}}, \postSnd{3}{1}{x}]} ]
\eda

Trace replay rule \rlabel{Send} could be applied on the event
resulting from location (1) or on the event resulting from location (2).

Suppose, we first process the event from location (2).
Next, we process the event from location (3).
This leads to the buffer
$$
[\postSnd{3}{1}{x}^{[0,0,2]}, \postSnd{2}{1}{x}^{[0,2,0]}]
$$
At this point we are stuck!
We cannot proceed, i.e.~apply rule \rlabel{Receive}, as the send events are buffered in the wrong order.
If we  switch the order and first process the event from location (2),
this then leads to successful trace replay run.
We conclude that there are no alternative schedules for this program run.
\end{example}

Can we somehow discard the first choice?
The receive connected to event $\postSnd{3}{1}{x}$
from location (3) happens later compared to the receive
connected to event $\postSnd{2}{1}{x}$ from location (2).
This can easily be derived from the trace position of each receive.
During trace replay we ensure that if there are two competing
send events, i.e.~their receives are from the same thread,
we favor the send with the earlier receive.

The trace position of the corresponding send can be computed in $O(m * m)$.
There is a unique connection among send-receive pairs.
For each send event $\postSnd{i}{j}{x}$
we can identify its corresponding receive event $\postRcv{i}{j}{x}$
via the sender's thread id~$i$ and the program counter~$j$.
There are $O(m)$ candidates and for each candidate there are $O(m)$ potential
receive partners to consider.
We annotate each send with the
thread id and the trace position in the trace of that thread
of its receiver partner.
Suppose $\postRcv{i}{j}{x}$ is found in trace with the id~$l$ at position $k$.
Then, we write $\postSnd{i}{j}{x}^{(l,k)}$.
We assume recorded send events carry the additional annotation.

\begin{definition}[Replay Strategy]
\label{def:replay-strategy}  
Whenever we apply rule \rlabel{Send} on $\postSnd{i}{j}{x}^{(l,k)}$
we ensure that there is no competing send $\postSnd{i'}{j'}{x}^{(l',k')}$
where $l=l$ and $k' < k'$.
\end{definition}

\begin{proposition}
  Any vector clock annotation derived via some exhaustive trace replay run
  can also be derived via the replay strategy specified in
  Definition~\ref{def:replay-strategy}.
\end{proposition}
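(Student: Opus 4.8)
The plan is to prove the stronger fact that \emph{every} exhaustive trace replay run already complies with the strategy of Definition~\ref{def:replay-strategy}; the proposition then follows at once by taking the very same run as the strategy-compliant witness. So it suffices to show: if a run is exhaustive (never stuck, all post events consumed), then whenever it applies rule \rlabel{Send} to some $\postSnd{i}{j}{x}^{(l,k)}$ there is no as-yet-unprocessed competing send $\postSnd{i'}{j'}{x}^{(l,k')}$ with $k' < k$. This is enough because the strategy only ever restricts \rlabel{Send} choices, and only among sends on the \emph{same} channel whose receivers sit in the \emph{same} thread; every other source of non-determinism (the order of \rlabel{Sync}, \rlabel{Signal/Wait}, and of sends whose receivers live in different threads) is left untouched, so no annotation reachable by an exhaustive run can be lost.

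First I would record the FIFO invariant of buffers that the rules force. Rule \rlabel{Send} fills the first empty slot of $B = B' \pp [\bot^{cs_1},\dots,\bot^{cs_m}]$, i.e.\ it appends behind the occupied prefix $B'$, while rule \rlabel{Receive} consumes the head $\clock{\postSnd{i_1}{j}{x}}{cs'} : B'$ and appends a fresh empty slot at the back. A routine induction on the length of the derivation shows that for each channel the occupied slots always form a contiguous prefix and that the buffer behaves as a genuine queue: for any two sends on the same channel, the one enqueued first by \rlabel{Send} is dequeued first by \rlabel{Receive}.

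The heart of the argument is a contradiction. Suppose an exhaustive run violated the strategy, enqueueing $s = \postSnd{i}{j}{x}$ (whose matching receive $r$ lies at position $k$ in thread $l$) before a competing $s' = \postSnd{i'}{j'}{x}$ (whose matching receive $r'$ lies at position $k' < k$ in the same thread $l$). By the queue invariant, $s$ is dequeued strictly before $s'$. Dequeuing happens only via rule \rlabel{Receive}, which matches the buffer head against the recorded sender signature; by the unique send--receive pairing guaranteed by the tracing scheme, $s$ can be dequeued only by $r$ and $s'$ only by $r'$, so $r$ is processed before $r'$. But a thread's trace is consumed head-first, and rule \rlabel{Signal/Wait} only delays, never reorders, a thread's events; hence from $k' < k$ it follows that $r'$ is processed before $r$, a contradiction. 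Example~\ref{ex:trace-replay-stuck} is exactly this phenomenon: enqueueing the send of thread~$3$ ahead of that of thread~$2$ leaves the first receive of the main thread unable to match the buffer head, and replay gets stuck.

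I expect the main obstacle to be the bookkeeping that links the \emph{static} annotation $(l,k)$ — the receiver's trace position, computed once up front — to the \emph{dynamic} order in which \rlabel{Send} and \rlabel{Receive} actually fire, since this is precisely where the queue invariant and the intra-thread ordering must be combined. Once that link is made precise, the rest is immediate: competing sends are forced by FIFO into receive-position order in every successful run, exactly as the strategy prescribes, whereas sends whose receivers live in different threads are left unordered by the strategy, so the genuine non-determinism responsible for distinct vector clock annotations (cf.\ Example~\ref{ex:buffer-one}) is fully retained.
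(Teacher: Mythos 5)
Your proposal is correct and follows essentially the same route as the paper's own proof: both show that every exhaustive replay run already complies with the strategy, by combining head-first consumption of thread-local traces, the unique send--receive pairing, and the FIFO behavior of buffers. Your contradiction-style phrasing (out-of-order enqueueing forces out-of-order dequeueing of receives) is just the contrapositive of the paper's direct argument, stated with a bit more bookkeeping.
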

\begin{proof}
  Consider some exhaustive trace replay run.
  Events for a particular thread are recorded  in the order as they are executed.
  Consider a specific trace where there are two consecutive receive events on the same channel.
  The earlier receive event will be processed first.
  As receives events are uniquely connected to their corresponding send events,
  this implies that the send event of the earlier receive must be processed before
  the send event of the later receive.
  This immediately holds for unbuffered channels. For buffered channels, the statement holds
  because buffered messages are queued.
  We conclude that the condition imposed by the replay strategy in  Definition~\ref{def:replay-strategy}
  do not affect any exhaustive trace replay run.
\end{proof}  

We come back to the above question.
Under the trace replay strategy in Definition~\ref{def:replay-strategy},
if rule \rlabel{Send} is applicable on two different thread-local traces that share
the same channel, does this imply that there is some alternative schedule?
The answer is still no as the next example shows.

\begin{example}
\label{ex:stuck-and-deadlock}  
  Consider
  \bda{lcl}
      [ x \assign \MAKECHAN{1},
        \\ y \assign \MAKECHAN{1},
        \\ \GO\ [\SEND{x}{1},        && (1)
          \\ \ \ \ \ \ \ \ \ \ \ \ \ \ \SEND{y}{1} ],  && (2)
        \\ \GO [\RCV{x} ],       && (3)
        \\ \GO [\SEND{x}{1}],    && (4)
        \\ \RCV{y},              && (5)
        \\ \RCV{x} ]             && (6)
      \eda
      We assume communications are as follows: (1) with (3),
      (2) with (5) and (4) with (6).
      Here is the resulting trace where assume that signal/trace events have already been processed.
      Main thread has the id~1. Helper threads are given increasing numbers from top to bottom starting with the id~2.      
  \bda{l}
      [\thread{1}{[\pre{\rcv{y}}, \postRcv{2}{2}{y}, \pre{\rcv{x}}, \postRcv{4}{1}{x} ]},
      \\ \thread{2}{[\pre{\snd{x}}, \postSnd{2}{1}{x}, \pre{\snd{y}}, \postSnd{2}{2}{y}]},
      \\ \thread{3}{[\pre{\rcv{x}}, \postRcv{2}{1}{x}]},
      \\ \thread{4}{[\pre{\snd{x}}, \postSnd{4}{1}{x}]} ]
      \eda

      We are free to choose any order among trace replay rules as long as we respect Definition~\ref{def:replay-strategy}.
      We could either apply rule \rlabel{Send} on thread~2 or on thread~4.
      Both choices are possible as the corresponding receive events are in distinct threads.
      We choose to apply \rlabel{Send} on thread~4 first.
      This leads to the following trace.
  \bda{l}
      [\thread{1}{[\pre{\rcv{y}}, \postRcv{2}{2}{y}, \pre{\rcv{x}}, \postRcv{4}{1}{x} ]},
      \\ \thread{2}{[\pre{\snd{x}}, \postSnd{2}{1}{x}, \pre{\snd{y}}, \postSnd{2}{2}{y}]},
      \\ \thread{3}{[\pre{\rcv{x}}, \postRcv{2}{1}{x}]},
      \\ \thread{4}{[]} ]
      \eda
      The buffer of channel $y$ is empty and $x$'s buffer consists of $[\postSnd{4}{1}{x}]$.
      For brevity, we ignore vector clock annotations.
      At this point we are stuck. No further trace replay rules are applicable and
      we still have unprocessed post events.
      If we apply rule \rlabel{Send} first
      on thread~2, this leads to a successful (exhaustive) trace replay run.
      For brevity we omit the details.
\end{example}

We conclude.
Application of rule \rlabel{Send} on two different thread-local traces that share
the same channel is a necessary but not sufficient criteria to identify some alternative schedule.
For trace replay (in case of buffered channels) this means that we might need to backtrack.
At this point, we do not know of a strategy that helps us to eliminate all choices that lead to
a stuck state. Recall, stuck means that no further trace replay rules are applicable
but the trace contains some unprocessed post events.

We might be tempted to interpret a stuck state as a deadlocking situation.
However, this is not necessarily the case as the following example shows.

\begin{example}
Consider the following program where we use Go-style syntax.  
  \begin{verbatim}
x := make(chan 1)
y := make(chan 1)
go { x <- 1        (1)
     y <- 1        (2)
   }
go { z:= <-x       (3)
     if z == 2 {
       <-x         (4)
       x <- 1      (5)
     }
   }
go { x <- 2 }      (6)
<-y                (7)
<-x                (8)
\end{verbatim}
The program is deadlock-free as the following observation shows.
Suppose (1) executes first. We cannot proceed with (6) as the buffer space is occupied.
Then, (3) executes. As the value received is 1,
(4) and (5) do not apply for this program run.
We find that (3) executes, then (7).
Finally, (4) executes followed by (8).

Suppose (6) executes first. Then, (1) is blocked.
We execute (3). The value received is 2, hence, (4) and (5) become relevant.
The buffer space is empty. Hence, we execute (1) followed by (4) and (5).
We execute (2) followed by (7). Finally, we execute (8).

We consider a program run where (1) executes first.
The resulting trace is as follows.
As usual, the main thread has id~1. The other threads
are given increasing numbers from top to bottom starting with id~2.
We ignore the program counter (by using the don't care value $\_$)
as the concrete values do not matter here.
\bda{ll}
    [\thread{1}{[\pre{\rcv{y}}, \postRcv{2}{\_}{y},
          \pre{\rcv{x}}, \postRcv{4}{\_}{x}]},
      \\  \thread{2}{[\pre{\snd{x}}, \postSnd{2}{\_}{x},
          \pre{\snd{y}}, \postSnd{2}{\_}{y}]},
      \\ \thread{3}{[\pre{\rcv{x}}, \postRcv{2}{\_}{x}]},
      \\ \thread{4}{[\pre{\snd{x}}, \postSnd{4}{\_}{x}]}
    ]
\eda
    
We consider trace replay.
We could either apply rule \rlabel{Send} on thread 2 or on thread 4.
Suppose we choose thread 4 which leads to the following trace.
\bda{ll}
    [\thread{1}{[\pre{\rcv{y}}, \postRcv{2}{\_}{y},
          \pre{\rcv{x}}, \postRcv{4}{\_}{x}]},
      \\  \thread{2}{[\pre{\snd{x}}, \postSnd{2}{\_}{x},
          \pre{\snd{y}}, \postSnd{2}{\_}{y}]},
      \\ \thread{3}{[\pre{\rcv{x}}, \postRcv{2}{\_}{x}]},
      \\ \thread{4}{[]}
    ]
\eda
The buffer of $y$ is empty and $x$'s buffer consists of
$[\postSnd{4}{\_}{x}]$.
At this point we are stuck.
We can not apply rule \rlabel{Send} on thread 2 because the buffer space is occupied.
We can not apply rule \rlabel{Receive} on thread 3 because
the element in $x$'s buffer is not the proper partner.

As observed above, the program is deadlock-free.
This shows that being stuck is not a sufficient condition to conclude
that that there is a deadlock.
\end{example}

Here comes some good news. For certain stuck situations we can identify a deadlocking
situation.

\begin{definition}
  We consider a stuck state during trace replay.
  We say that this state is \emph{completely stuck} iff
  for any post receive event in leading position in some trace,
  the buffer space of the respective channel is empty.
\end{definition}
Note that completely stuck implies if there is some post send event
in leading position, the buffer space of the respective channel is fully occupied.

\begin{proposition}
Completely stuck implies that there is a program run that leads to a deadlock.
\end{proposition}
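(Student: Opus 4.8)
The plan is to show that a completely stuck state during trace replay directly encodes a partial schedule that, when executed by the actual program, cannot be extended to completion because every thread with an unprocessed post event is permanently blocked. First I would fix the completely stuck configuration $\sep{Q}{U}$ and record which events of each $T_i$ have already been consumed. Because senders and receivers are uniquely connected via the sender's thread id and program counter (recall the properties in Section~\ref{sec:properties}), the prefix of each thread-local trace that has already been replayed corresponds to a concrete, \emph{valid} interleaving of the program: every committed send has been matched to exactly its recorded receive. Replaying the consumed prefixes therefore exhibits a genuine program run up to the stuck point.

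Next I would argue that, at this point, \emph{every} thread that still has a post event in leading position is blocked in the program. There are only a few cases for the leading event $t$ of such a thread. If $t$ is a post receive $\postRcv{i'}{j'}{x}$, then by the completely-stuck hypothesis the buffer of $x$ is empty, so rule \rlabel{Receive} cannot fire; and since replay is stuck there is also no matching unbuffered sender available via rule \rlabel{Sync}. Dually, if $t$ is a post send $\postSnd{i}{j}{x}$, then (by the remark following the definition) the buffer of $x$ is fully occupied and no receive is enqueued to free it, so rule \rlabel{Send} cannot fire either. Rules \rlabel{Close}, \rlabel{Default}, and \rlabel{Receive-Closed} operate on a single trace and would be applicable if their trigger event were in leading position; since we are stuck, no such event heads a trace. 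Hence the only threads that are not finished are those whose next operation is a send or receive whose communication partner is unavailable, which is precisely the definition of a deadlocked configuration.

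From here I would conclude by exhibiting the program run itself: execute the program following exactly the consumed prefixes of the thread-local traces (this is schedulable because each step matches a recorded, valid synchronization), and show that upon reaching the stuck frontier the program is in a state where at least one thread still has pending operations yet no thread can take a step. That is a deadlock. The key fact making the argument go through is that trace replay is faithful to program execution on the consumed prefix, so no reachability gap is introduced between ``stuck in replay'' and ``blocked in the program.''

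The hard part will be the faithfulness step, namely justifying that the consumed prefixes can always be scheduled by the actual program in some order. The earlier stuck-versus-deadlock example warns that replay can wander into states the program never reaches, so I must be careful that the \emph{completely} stuck condition rules out the problematic cases: specifically, I need the invariant that at any completely stuck configuration the consumed events form a prefix-closed, causally consistent set (respecting the intra-thread order of each $T_i$ and the signal/wait and send/receive matchings). Establishing this invariant, and then showing it precludes the spurious situation where a stuck buffer state has no corresponding reachable program state, is where the real work lies; the case analysis on leading events in the second paragraph is then essentially bookkeeping.
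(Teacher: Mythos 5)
Your plan follows essentially the same route as the paper's own proof: first, replay faithfulness (every sequence of replay steps corresponds to an actual program run because send--receive pairs are uniquely connected via the sender's thread id and program counter), and second, the observation that in a completely stuck state every pending send sees a full buffer and every pending receive an empty one, so each such thread is blocked in the program as well, yielding a deadlock. The paper's proof is merely a terser assertion of these two facts, so your case analysis on leading events and your explicitly flagged faithfulness invariant are refinements of, not departures from, its argument.
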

\begin{proof}
  Trace replay rules mimic operational semantic rules where
  we assume a fixed connection among send-receive pairs.
  So, any sequence of trace replay rule steps corresponds to an actual program run.
  Consider the completely stuck state.
  No further rules are applicable.
  There are no alternative matchings among send-receive pairs available to leave
  this (stuck) state. Hence, we have reached a deadlocking situation.
\end{proof}

\begin{example}
Consider the following program.
\begin{verbatim}
x := make(chan 1)
y := make(chan 1)
go { x <- 1         (1)
     y <- 1         (2)
     y <- 1         (3)
   }
go { x <- 1         (4)
   }
go { <-y            (5)
     <-x            (6)
   }
<-y                 (7)
<-x                 (8)
\end{verbatim}
We consider a program run where
(1) communicates with (6),
(2) with (5),
(3) with (7) and (4) with (8).
The resulting trace is as follows.
\bda{ll}
    [
      \thread{1}{[\pre{\rcv{y}}, \postRcv{2}{\_}{y},
                  \pre{\rcv{x}}, \postRcv{3}{\_}{x}]},
\\      \thread{2}{[\pre{\snd{x}}, \postSnd{2}{\_}{x},
                  \pre{\snd{y}}, \postSnd{2}{\_}{y},
                  \pre{\snd{y}}, \postSnd{2}{\_}{y} ]},
\\      \thread{3}{[\pre{\snd{x}}, \postSnd{3}{\_}{x}]},
\\      \thread{4}{[\pre{\rcv{y}}, \postRcv{2}{\_}{y},
                    \pre{\rcv{x}}, \postRcv{2}{\_}{x}]}
    ]
\eda

Now, consider trace replay where we apply rule \rlabel{Send} on thread 3.
This leads to
\bda{ll}
    [
      \thread{1}{[\pre{\rcv{y}}, \postRcv{2}{\_}{y},
                  \pre{\rcv{x}}, \postRcv{3}{\_}{x}]},
\\      \thread{2}{[\pre{\snd{x}}, \postSnd{2}{\_}{x},
                  \pre{\snd{y}}, \postSnd{2}{\_}{y},
                  \pre{\snd{y}}, \postSnd{2}{\_}{y} ]},
\\      \thread{3}{[]},
\\      \thread{4}{[\pre{\rcv{y}}, \postRcv{2}{\_}{y},
                    \pre{\rcv{x}}, \postRcv{2}{\_}{x}]}
    ]
\eda
where $y$'s buffer space is empty and $x$'s buffer space
consists of $[\postSnd{3}{\_}{x}]$.
We are completely stuck.
Indeed, the program run implied by trace replay
results in a deadlock.
\end{example}

\section{Alternative Tracing Schemes}
\label{app:alternatives}

Suppose we only trace the sender's thread id (but not the program counter).
We write $\post{\link{i}{\rcv{x}}}$ to denote a committed receive operation
where the value is obtained from thread~$i$.
For committed sends, we simply write $\post{\snd{x}}$.

\begin{example}
Consider
\bda{lcl}
    [ x \assign \SYNCMAKECHAN, && (1) \\
      \GO\ [\RCV{x}], && (2) \\
      \GO\  [\RCV{x}], && (3) \\
      \SEND{x}{2}, \SEND{x}{3} ]
\eda
We assume a specific program run where the receive in thread~2
obtains the value~2 and the other receive obtains the value~3.
Here is the resulting trace.

\bda{ll}
    [\thread{1}{[\signalTrace{2}, \signalTrace{3}, \pre{\snd{x}}, \post{\snd{x}}, \pre{\snd{x}}, \post{\snd{x}}]},
   \\ \thread{2}{[\waitTrace{2}, \pre{\rcv{x}}, \post{\link{1}{\rcv{x}}}]},
   \\ \thread{3}{[\waitTrace{3}, \pre{\rcv{x}}, \post{\link{1}{\rcv{x}}}]}
   \eda
   After two consecutive applications of rule \rlabel{Signal/Wait}, we reach
   the trace.
\bda{ll}
    [\thread{1}{[\pre{\snd{x}}, \post{\snd{x}}, \pre{\snd{x}}, \post{\snd{x}}]},
   \\ \thread{2}{[ \pre{\rcv{x}}, \post{\link{1}{\rcv{x}}}]},
   \\ \thread{3}{[ \pre{\rcv{x}}, \post{\link{1}{\rcv{x}}}]}
   \eda
   There are two possible ways to synchronize.
   Either involving thread 1 and 2 or thread 1 and 3.
   The second option leads to a non-predictable program run
   as we assume the first send is received by thread 2.
\end{example}

In the extreme case, we could ignore any connections between send-receive pairs.
This yields a tracing scheme with very low run-time overhead.
The consequence is that the analysis is no longer predictive and false positives may arise.
Below we consider an example where the analysis suggests there is a deadlock where there is
in reality none.

\begin{example}
  Consider following the program in Go-style syntax.
\begin{verbatim}
x := make(chan 0)              (Thread 1)
z := make(chan 0)
go { x <- 1 }                  (Thread 2)
go { x <- 2                    (Thread 3)
       z <- 1 }
y := <-x
if y == 2 {
    <-z
   <-x
} else {
   <-x
   <-z
}
\end{verbatim}  
We consider a program run for which $y$ holds the value 2.
That means, thread 3 transmits the value 2 first.

Here is the (abbreviated) list of run-time traces. We only record post events.
Recall that no links between send-receive pairs are recorded.

\bda{ll}
    [\thread{1}{[\rcv{x}, \rcv{z}, \rcv{x}]},
      \\ \thread{2}{[\snd{x}]},
      \\ \thread{3}{[\snd{x}, \snd{z}]} ]
\eda
Trace replay may get stuck here. First example, if we first process thread 2 and then thread 2.
\bda{ll}
    [\thread{1}{[\rcv{z}, \rcv{x}]},
      \\ \thread{2}{[]},
      \\ \thread{3}{[\snd{x}, \snd{z}]} ]
\eda
We might be tempted to interpret being stuck as the program contains a deadlock.
This is not the case!
The program is deadlock-free.
The issue is that the trace replay chosen does not correspond to any actual program run.
\end{example}

\section{Comparison to \cite{DBLP:conf/hvc/SulzmannS17}}
\label{sec:comparison-dg}

\subsection{Dependency Graph Construction}

We repeat the details of the dependency graph construction.
We follow the notation used in~\cite{DBLP:conf/hvc/SulzmannS17}.
Each pre is followed by a post event.
This allows for a more uniform construction of the dependency graph
and the analysis that is carried out on the graph.
Adding a dummy post event to each dangling pre event can be achieved
via a simple scan through the list of traces.
For example,

$$[\thread{1}{T_1 \pp [\pre{[\rcv{x}]}]}, \dots, \thread{n}{T_n}]$$

is transformed into

$$[\thread{1}{T_1 \pp [\pre{[\rcv{x}]}, \post{\thread{n_1}{\post{\rcv{x}}}}]}, \dots, \thread{n}{T_n}, \thread{n+1}{[\pre{[\snd{x}]}, \post{\snd{x}}]}]$$
We assume that this transformation is applied until there are no dangling pre events left.

We further assume that all post events are annotated with the program location
of the preceding send/receive operation.
For example, $\post{\ploc{\snd{x}}{4}}$ denotes a post event
connected to a send via channel $x$ at program location $4$.
For dummy post events added, we assume some dummy program locations.

\begin{definition}[Construction of Dependency Graph]
\label{def:dg}
The dependency graph is a directed graph $G=(N,V)$ where $N$ denotes
the set of nodes and $E$ denotes
the set of (directed) edges, represented as pairs of nodes.
Nodes correspond to post events (annotated with program locations).
We obtain the dependency graph from the transformed list $[\thread{1}{T_1}, \dots, \thread{n}{T_n}]$
of traces as follows.
We write $\delta$ to denote $\snd{x}$, $\rcv{x}$, $\close{x}$ and $\select$.
This characterizes all possibles shapes of post events (ignoring program locations).

\bda{lcl}
N & = & \{ \ploc{\delta}{l} \mid \exists i. T_i = [\dots,\post{\ploc{\delta}{l}},\dots] \}
\\ & \cup & \{ \close{x} \mid \exists i,l. T_i = [\dots,\post{\ploc{\close{x}}{l}},\dots] \}
 \\
 \\
 E & = & \{  (\ploc{\delta}{l}, \ploc{\delta'}{l'}) \mid
 \\ & & \exists i. T_i = [\dots,\pre{\dots},\post{\ploc{\delta}{l}},\pre{\dots},\post{\ploc{\delta'}{l'}},\dots] \} \ (1)
 \\ & \cup &
 \{ (\ploc{\snd{x}}{l}, \ploc{\rcv{x}}{l'}) \mid
 \\ && \exists i,j. T_i = [\dots,\pre{\dots},\post{\ploc{\snd{x}}{l}},\dots] 
 \\ & & \ \ \ \ \ \ \ \ \ \ \ \ \ \ \ \ \ \ \ \ \ \ \ \ \ \ \  T_j = [\dots,\pre{\dots},\post{\ploc{\thread{i}{\rcv{x}}}{l}},\dots] \} \ (2)
 \\ & \cup &
 \{ (\close{x}, \ploc{\rcv{x}}{l}) \mid
  \\ && \exists i. T_i = [\dots,\pre{\dots},\post{\ploc{\thread{0}{\rcv{x}}}{l}},\dots] \} \ (3)
\eda
\end{definition}  
Each post event is turned into a node. Recall that all dangling pre events in the initial list of traces
are provided with a dummy post event.
For all close operations on some channel $x$, we generate a node that does not reference
the program location. The reason will be explained shortly.
For each trace, pre/post events take place in sequence.
Hence, there is an edge from each to node the following (as in the program text) node.
See (1).
For each send/receive synchronization we find another edge. See (2).
We consider the last case (3).

A receive can also synchronize due to a closed channel.
This is easy to spot as the thread id number attached to the post event is the dummy value~$0$.
To identify the responsible close operation we would require a replay.
We avoid this extra cost and overapproximate by simply drawing
an edge from node $\close{x}$ to to $\ploc{\rcv{x}}{l}$.
Node $\close{x}$ is the representative for one of the close operations in the program text.

To check if one event happens-before another event we seek for
a path from one event to the other.
Two events are concurrent if neither happens-before the other.
To check for alternative communications, we check for
matching nodes that are concurrent to each other.
By matching we mean that one of the nodes is a send and the other is a receive
over the same channel.

\subsection{Limitations}

For the detection of alternative communications,
the dependency graph is either too optimistic or pessimistic,
depending if we ignore or include inter-thread connections.
For intra-thread dependencies, the dependency graph turns out to be too optimistic.
We explain the above points via some examples.

Consider the following program. 
\bda{lcl}
    [ x \assign \SYNCMAKECHAN,      && (1)
      \\ \GO\ [\SEND{x}{1}],        && (2)
      \\ \GO\ [\RCV{x}],             && (3)
      \\ \GO\ [\RCV{x}, \SEND{x}{1}]] && (4)

\eda
In the above, we omit explicit program locations.
We assume that the send operation in thread 2 is connected to location 2 and so forth.
In case of thread 4, the receive operation is connected to location 4 and
the send operation connected to location 4'.

We consider a program run where thread 2 synchronizes with thread 4 and
then thread 3 synchronizes with thread 4.
This leads to the following trace.
As for the earlier example, we ignore $\signalTrace{\cdot}$ and $\waitTrace{\cdot}$ events.

\bda{l}
    [\thread{1}{[]},
      \\ \thread{2}{[\pre{\snd{x}}, \post{\ploc{\snd{x}}{2}}]},
      \\ \thread{3}{[\pre{\rcv{x}}, \post{\thread{4}{\ploc{\rcv{x}}{3}}}]},
      \\ \thread{4}{[\pre{\rcv{x}}, \post{\thread{2}{\ploc{\rcv{x}}{4}}}, \pre{\snd{x}}, \post{\ploc{\snd{x}}{4'}}]}
    ]
\eda
We derive the following dependency graph.

\bda{c}
\includegraphics[scale=0.5]{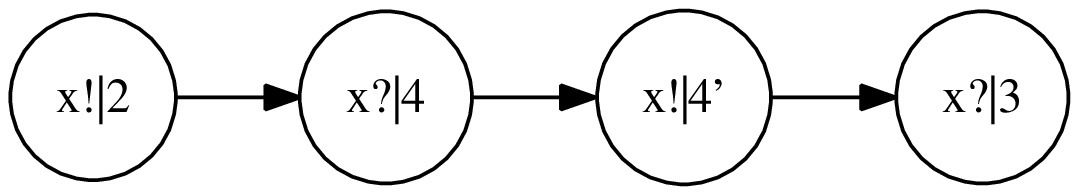}
\eda
It seems that there are no alternative communications.
Matching events $\snd{x}|2$ and $\rcv{x}|3$
are not concurrent because $\rcv{x}|3$ is reachable from $\snd{x}|2$.
On the other hand, based on the trace (replay), it follows immediately
that $\snd{x}|2$ and $\rcv{x}|3$ represent some alternative communication.
Hence, the graph representation is too pessimistic here.

Could we adjust the dependency graph by ignoring inter-thread dependencies?
In terms of the graph construction, see Definition~\ref{def:dg},
we ignore edges
\bda{l}
 \{  (\ploc{\delta}{l}, \ploc{\delta'}{l'}) \mid
 \\ \exists i. L_i = [\dots,\pre{\dots},\post{\ploc{\delta}{l}},\pre{\dots},\post{\ploc{\delta'}{l'}},\dots] \}
\eda
For the above example, this removes the edge from $\rcv{x}|4$ to $\snd{x}|4'$
and then $\snd{x}|2$ and $\rcv{x}|3$ are concurrent to each other.
Unfortunately, the thus adjusted dependency graph may be overly optimistic.

Consider
\bda{lcl}
    [ x \assign \SYNCMAKECHAN,         && (1)
\\      \GO\ [\SEND{x}{1}, \SEND{x}{1}],  && (2)
\\      \GO\ [\RCV{x}],                   && (3)
\\      \GO\ [\RCV{x}, \SEND{x}{1}, \RCV{x}]] && (4)
\eda
As before, we omit explicit program locations.
In case of thread 4, the sequence of receive, send and receive operations
is (implicitly) labeled with locations 4, 4' and 4''.
In thread 2, the first send is labeled with location 2 and
the second second with location 2'.

We consider a program run where thread 2 synchronizes with thread 4.
Thread 4 synchronizes with thread 3 and finally thread 2 synchronizes with thread 4.
Below is the resulting trace and the dependency graph derived from the trace.

\bda{l}
    [\thread{1}{[]},
\\   \thread{2}{[\pre{\snd{x}}, \post{\ploc{\snd{x}}{2}}, \pre{\snd{x}}, \post{\ploc{\snd{x}}{2'}}]},
\\   \thread{3}{[\pre{\rcv{x}}, \post{\thread{4}{\ploc{\rcv{x}}{3}}}]},
\\   \thread{4}{[\pre{\rcv{x}}, \post{\thread{2}{\ploc{\rcv{x}}{4}}},
                 \pre{\snd{x}}, \post{\ploc{\snd{x}}{4'}},
                 \pre{\rcv{x}}, \post{\thread{2}{\ploc{\rcv{x}}{4''}}}]}
    ]

\eda

\bda{c}
\includegraphics[scale=0.5]{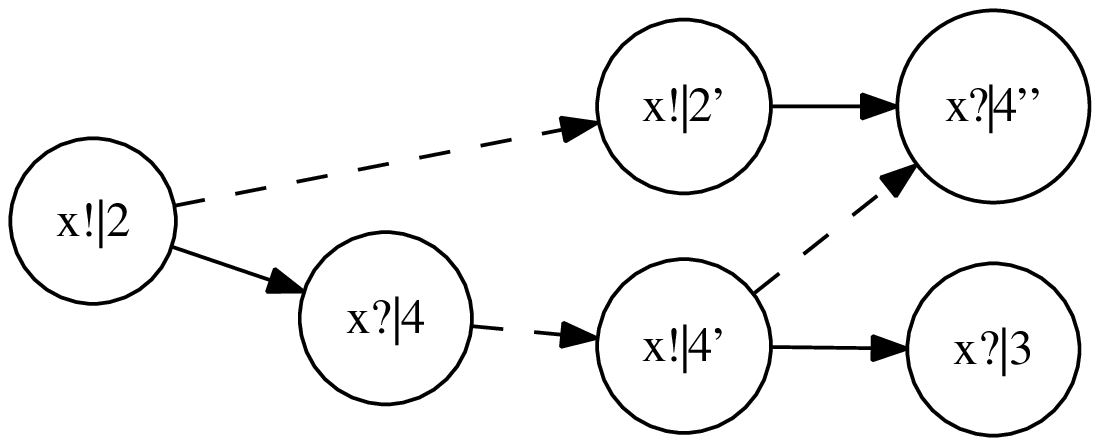}
\eda

Dashed edges represent inter-thread dependencies.
Assuming we ignore inter-thread dependencies,
matching events $\snd{x}|2$ and $\rcv{x}|3$ are concurrent
to each other as neither node can be reached from the other node.
This is correct and consistent with the trace replay method.
However, we also identify $\snd{x}|2$
and $\rcv{x}|4''$ as concurrent to each other.
This is too optimistic. There is no schedule where
$\snd{x}|2$
and $\rcv{x}|4''$ can be concurrent to each other.

Besides inter-thread dependencies, we may also encounter intra-thread dependencies.
Consider the following program.
\bda{lcl}
    [ x \assign \SYNCMAKECHAN,              && (1)
\\      \GO\ [\SEND{x}{1}],                   && (2)
\\      \GO\ [\RCV{x},                        && (3)
\\ \ \ \ \ \ \GO\ [\SEND{x}{1}],   && (4)
\\ \ \ \ \ \ \GO\ [\RCV(x)]]]   && (5)        
\eda
Within thread 3, we create thread 4 and 5.
Thread 4 and 5 will only become active once the receive operation in thread 3 synchronizes
with the send operation in thread 2.
Our trace replay method takes care of this intra-thread dependency
via $\signalTrace{\cdot}$ and $\waitTrace{\cdot}$ events.
In our construction of the dependency graph, we ignore such events
and therefore the dependency graph is too optimistic.
Based on the dependency graph, not shown here, we would derive (wrongly)
derive that receive operation in thread~5 is an alternative communication partner
for the send operation in thread~2.

Another limitation is that the tracing scheme in \cite{DBLP:conf/hvc/SulzmannS17}
only records the sender's thread id (and not the program counter as well).
As observed in Appendix~\ref{app:alternatives}.
This implies that the analysis is non-predictive.

\end{document}